\newcommand{\hathat}[1]{% 
\begingroup%
  \let\macc@kerna\z@%
  \let\macc@kernb\z@%
  \let\macc@nucleus\@empty%
  \hat{\mathchoice%
    {\raisebox{.2ex}{\vphantom{\ensuremath{\displaystyle #1}}}}%
    {\raisebox{.2ex}{\vphantom{\ensuremath{\textstyle #1}}}}%
    {\raisebox{.16ex}{\vphantom{\ensuremath{\scriptstyle #1}}}}%
    {\raisebox{.14ex}{\vphantom{\ensuremath{\scriptscriptstyle #1}}}}%
    \smash{\hat{#1}}}%
\endgroup%
}
\newcommand{\cA}{\mathcal{A}}
\newcommand{\boldA}{\mathbf{A}}
\newcommand{\boldB}{\mathbf{B}}
\newcommand{\boldC}{\mathbf{C}}
\newcommand{\boldS}{\mathbf{S}}
\newcommand{\boldX}{\mathbf{X}}
\newtheorem{pid axiom}{PID Axiom}
\newtheorem{sid axiom}{SID Axiom}
\newtheorem{remark}{Remark}
\newtheorem{definition}{Definition}
\newtheorem{lemma}{Lemma}
\newtheorem{corollary}{Corollary}
\newtheorem{property}{Property}
\newtheorem{theorem}{Theorem}
\newtheorem{observation}{Observation}
\begin{document}
%\title{System Information Decomposition: \\Revealing Multivariate Symmetric Relationships} 
%\title{Subsystem Inconsistency in\\Lattice-based Partial Information Decomposition}
%\title{On the Impossibility of Subsystem Consistency\\in Partial Information Decomposition}
%\title{Synergy-Induced Subsystem Inconsistency\\in Partial Information Decomposition}
%\title{The Whole Is Not Equal to the Sum Of Parts\\in Partial Information Decomposition}
% \title{The Whole Is Not Equal to the Sum of Parts:\\Subsystem Inconsistency in Partial Information Decomposition}
\title{The Whole Is Less than the Sum of Parts: Subsystem Inconsistency in Partial Information Decomposition}
% \author{\textit{Anonymous authors}}
\author{\textbf{Aobo Lyu}$^\star$, \textbf{Andrew Clark}$^{\star}$, and \textbf{Netanel Raviv}$^\dagger$\\
$^\star$Department of Electrical and Systems Engineering, Washington University in St. Louis, St. Louis, MO, USA\\
		$^\dagger$Department of Computer Science and Engineering, Washington University in St. Louis, St. Louis, MO, USA\\
   \texttt{aobo.lyu@wustl.edu}, \texttt{andrewclark@wustl.edu}, \texttt{netanel.raviv@wustl.edu}}

% \author{\textbf{Aobo Lyu}$^\star$, \textbf{Andrew Clark}$^{\star}$, and \textbf{Netanel Raviv}$^\dagger$\\
% $^\star$Department of Electrical and Systems Engineering, Washington University in St. Louis, St. Louis, MO, USA\\
% 		$^\dagger$Department of Computer Science and Engineering, Washington University in St. Louis, St. Louis, MO, USA\\
%    \texttt{aobo.lyu@wustl.edu}, \texttt{andrewclark@wustl.edu}, \texttt{netanel.raviv@wustl.edu}}

\maketitle
\makeatletter
\def\@author{}
\makeatother

\begin{abstract}
Partial Information Decomposition (PID) was proposed by Williams and Beer in 2010 as a tool for analyzing fine-grained interactions between multiple random variables, and has since found numerous applications ranging from neuroscience to privacy.
However, a unified theoretical framework remains elusive due to key conceptual and technical challenges. 
We identify and illustrate a crucial problem: PID violates the set-theoretic principle that the whole equals the sum of its parts (WESP). 
Through a counterexample in a three-variable system, we demonstrate how such violations naturally arise, revealing a fundamental limitation of current lattice-based PID frameworks.
To address this issue, we introduce a new axiomatic framework, termed System Information Decomposition (SID), specifically tailored for three-variable systems. SID resolves the WESP violation by redefining the summation rules of decomposed information atoms based on synergistic relationships. 
However, we further show that for systems with four or more variables, no partial summation approach within the existing lattice-based structures can fully eliminate WESP inconsistencies.
Our results thus highlight the inherent inadequacy of (antichain) lattice-based decompositions for general multivariate systems.
%Therefore, we advocate
% Our results can be seen as ??? to move beyond antichain-based frameworks, toward new foundational structures that can support more robust multivariate information decomposition. 
% This work not only remedies the problems present in existing information decomposition frameworks but also elucidates the synergistic effects of information among multiple variables, thereby contributing to the advancement of multivariate information theory.

\end{abstract}
\pagestyle{empty}

\section{Introduction}
% \red{[General: Make a good connection between Section II and Section III: Is this paper about decomposing MI or entropy?]}
Understanding how information is shared among multiple variables is a fundamental challenge in information theory. 
Partial Information Decomposition (PID), introduced by Williams and Beer~\cite{williams2010nonnegative}, addresses this by decomposing the mutual information between a group of source variables and a target variable into distinct and independent information atoms,
%and independent components---unique, redundant, and synergistic
which reveal how the sources individually and collectively influence a target. 
The implementation of this framework is based on a structure called  \textit{the redundancy lattice}~\cite{crampton2001completion}, which is inspired by and closely aligned with the principles of classical set theory.
% Understanding how information is shared among multiple variables is a fundamental challenge in information theory. Partial Information Decomposition (PID) has emerged as a pivotal framework to address this challenge by dissecting the mutual information that a set of source variables contains about a target variable. Introduced by Williams and Beer~\cite{williams2010nonnegative}, PID partitions the total mutual information into distinct components — unique, redundant (shared), and synergistic contributions — that explain how different sources individually and jointly influence the target. This framework has provided deep insights into variable interactions in complex systems and has been applied across diverse domains. 
 % \red{[Put longer version text under ``newif''.]}
\ifthenelse{\boolean{LongVersion}}{In brain network analysis, PID (or similar ideas) has been instrumental in measuring correlations between neurons~\cite{schneidman2003synergy} and understanding complex neuronal interactions in cognitive processes \cite{varley2023partial}. For privacy and fairness studies, the synergistic concept provides insights about data disclosure mechanisms~\cite{rassouli2019data,hamman2023demystifying}. In the field of causality, information decomposition can be used to distinguish and quantify the occurrence of causal emergence \cite{rosas2020reconciling}, and more.}{Since its conception PID has been widely applied, providing deep insights into neural correlations~\cite{schneidman2003synergy} and cognitive processes~\cite{varley2023partial}, privacy and fairness in data disclosure~\cite{rassouli2019data,hamman2023demystifying}, causal emergence phenomena~\cite{rosas2020reconciling}, and more.}

However, despite these successful applications, the PID framework itself remains incomplete and faces fundamental challenges. 
Notably, despite extensive research efforts~\cite{griffith2014intersection, ince2017measuring, bertschinger2013shared, harder2013bivariate, bertschinger2014quantifying, lyu2024explicit}, no existing PID measure simultaneously satisfies all PID axioms and desired properties. 
We argue that this persistent difficulty arises primarily from conceptual flaws inherent to the PID framework. 
Specifically, PID implicitly relies on the assumption that \textit{the whole equals the sum of its parts (WESP)}—a set-theoretic principle, which might be violated by information measures. 
Such violations have previously been noted~\cite{kolchinsky2022novel} through the perspective of the inclusion-exclusion principle~\cite{ting1962amount,yeung1991new}. 
In this paper, we point out that enforcing WESP overlooks higher-order synergistic effects, resulting in inconsistent behavior in subsystems, where the sum of PID components differs from the total information of the system.

In light of these challenges, there is a need for a new framework that can explore the rules that information decomposition should follow without relying on the WESP principle. 
In this direction, this paper makes three main contributions:
(i) We reveal an inherent inconsistency in PID by explicitly demonstrating WESP violation with a three-source counterexample.
(ii) We introduce System Information Decomposition (SID), a novel axiomatic framework for three-source systems with a target equal to sources, resolving PID’s WESP inconsistency by redefining information summation rules within a simplified antichain structure. 
Then, it is shown that an extension of the well-known Gács-Körner common information measure~\cite{gacs1973common} satisfies this axiomatic framework.
(iii) We demonstrate that for a general multivariate system, no antichain lattice-based summation method can fully resolve the WESP inconsistency, by presenting two systems with identical information atoms, but different mutual information due to different synergistic dynamics.
Our results advocate for a shift beyond antichain (set-partition) logic toward new structural foundations capable of capturing higher-order synergistic relationships.
The remainder of the paper is structured as follows. Section~\ref{sec:PID} reviews PID and presents our counterexample illustrating WESP violations. Section~\ref{sec:SID} proposes the SID axiomatic framework for a three-variable system and an operational definition of redundancy, demonstrating how SID resolves these inconsistencies. Section~\ref{sec:limitation of ac} extends the analysis to four-variable systems, showing why the antichain-based approach is insufficient. Finally, Section~\ref{sec:discussion} discusses the implications of the work and future directions.

\section{Partial Information Decomposition}
\label{sec:PID}
Williams and Beer~\cite{williams2010nonnegative} introduced the Partial Information Decomposition (PID) framework to decompose multivariate information axiomatically. Consider sources~$S_1,S_2$ and target~$T$: the mutual information~$I(S_1,S_2;T)$ decomposes into redundant, unique, and synergistic atoms (see Figure~\ref{fig:PID}). The redundancy $\operatorname{Red}(S_1,S_2\to T)$ is shared information; the unique atom $\operatorname{Un}(S_1\to T|S_2)$ is information exclusively from~$S_1$ (sim. $\operatorname{Un}(S_2\to T|S_1)$), and synergy $\operatorname{Syn}(S_1,S_2\to T)$ emerges only from joint observation of $S_1$ and $S_2$.

\begin{figure}[htbp]
\centering
\fbox{\includegraphics[width=.65\linewidth]{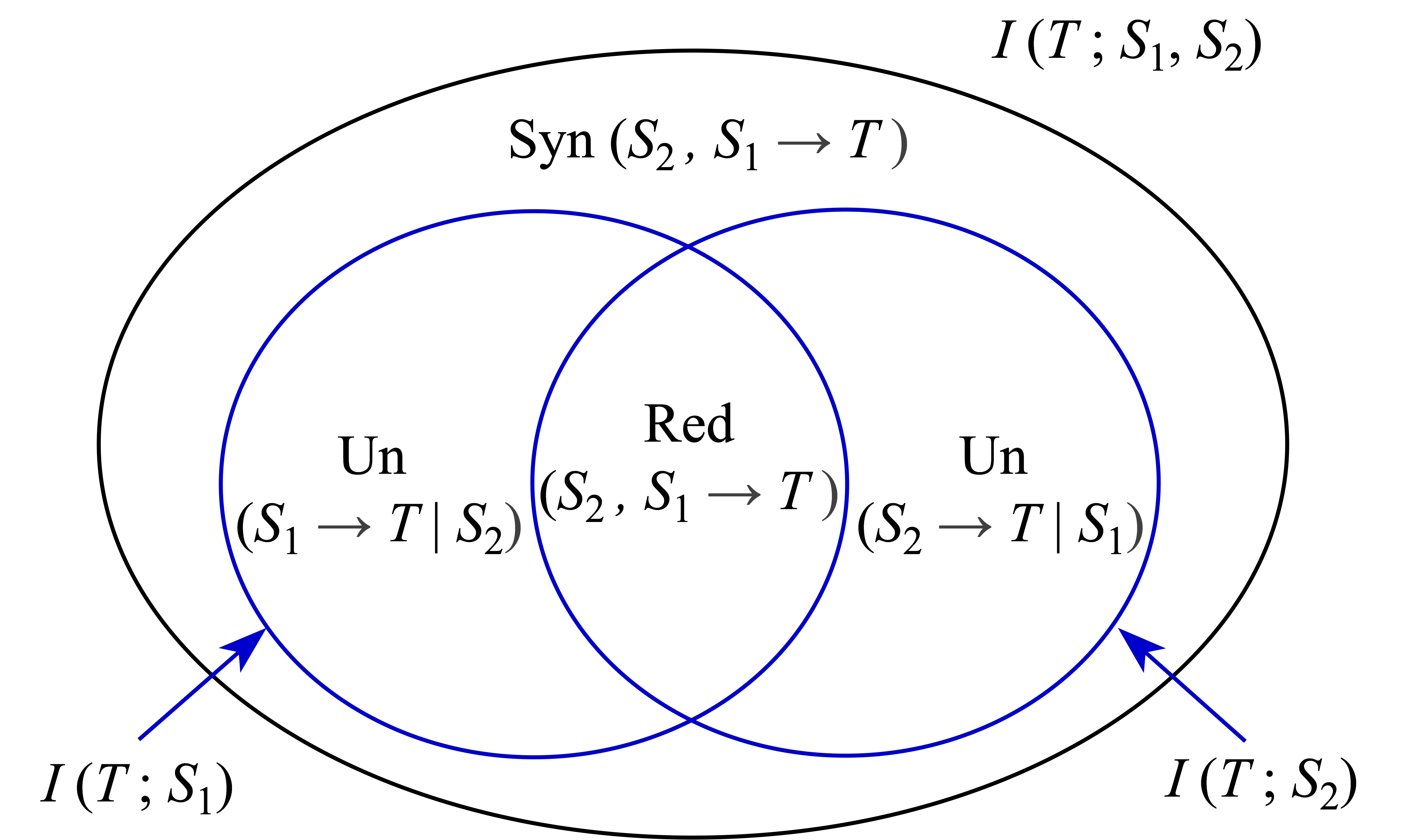 }}
\caption{The structure of PID with two source variables, i.e.,~\eqref{equ:Information Atoms' relationship_1}~\eqref{equ:Information Atoms' relationship_2}.}
\label{fig:PID}
\end{figure}
% \red{[Or: The redundant PI-atom $\operatorname{Red}(S_1,S_2\to T)$ (also called ``shared'') represents the information that either~$S_1$ or~$S_2$ provides about~$T$. 
% The unique PI-atom $\operatorname{Un}(S_1 \to T|S_2)$ is the information provided only by~$S_1$ about $T$ (and not by $S_2$), and analogously for $\text{Un}(S_2 \to T \mid S1)$. 
% The synergistic PI-atom $\operatorname{Syn}(S_1,S_2\to T)$ 
% (also called ``complementary''),
% represents the information that can only be known about~$T$ through the \textit{joint} observation of~$S_1$ and~$S_2$, but cannot be provided by either one of them separately. }
Together, we have that for the whole system $(S_1,S_2,T)$,
\begin{align}
\label{equ:Information Atoms' relationship_1}
    I((S_1,S_2);T) = \operatorname{Red}(S_1,S_2 \to T) +\operatorname{Syn}(S_1,S_2 \to T) \nonumber\\+\operatorname{Un}(S_1\to T\vert S_2)+ \operatorname{Un}(S_2\to T\vert S_1),
\end{align}
and for each subsystem $(S_1,T)$ and $(S_2,T)$,
% We refer to~\eqref{eqn:basicRelation} as \textit{Partial Information Decomposition} (PID).
% Moreover, since the redundant atom together with one of the unique atoms constitute all information that one source variable implies about the target variable, it must be the case that their summation equals the mutual information between the two, i.e., 
\begin{align}\label{equ:Information Atoms' relationship_2}
    I(S_1;T) &= \operatorname{Red}(S_1,S_2 \to T) + \operatorname{Un}(S_1\to T\vert S_2), \mbox{ and}\nonumber\\
    I(S_2;T) &= \operatorname{Red}(S_1,S_2 \to T) + \operatorname{Un}(S_2\to T\vert S_1).
\end{align}
% In a similar spirit, the synergistic information atom and one of the unique information atoms measure shared information between the target variable and one of the source variables, while excluding the other source variable. Therefore, the summation of these quantities should coincide with the well-known definition of conditional mutual information, i.e.,
% \begin{align}
%     I(T;S_1|S_2) &= \operatorname{Syn}(S_1,S_2 \to T) + \operatorname{Un}(S_1\to T\vert S_2),\mbox{ and}\nonumber\\
%     I(T;S_2|S_1) &= \operatorname{Syn}(S_1,S_2 \to T) + \operatorname{Un}(S_2\to T\vert S_1).\label{equ:Information Atoms' relationship_3}
% \end{align}

For general systems with source variables $\boldS = \{S_1,\dots,S_n\}$ and target $T$, PID uses \textit{the redundancy lattice}~$\mathcal{A}(\boldS)$~\cite{williams2010nonnegative,crampton2001completion}, which is the set of antichains formed from the power set of $\boldS$ under set inclusion with a natural order~$\preceq_\boldS$.
\begin{definition}[PID Redundancy Lattice]
\label{definition:PID lattice}
%The lattice~$\cA(\boldS)$ contains the set of antichains and a partial order relation between them.
For the set of source variables $\boldS$, the set of antichains is:
\begin{align}
\mathcal{A}(\boldS) = \{\alpha \in \mathcal{P}_1(\mathcal{P}_1(\boldS)):\forall \boldA_i,\boldA_j \in \alpha, \boldA_i \not \subset \boldA_j\},
\end{align}
where $\mathcal{P}_1(\boldS) = \mathcal{P}(\boldS) \setminus \{\varnothing\}$ is the set of all nonempty subsets of $\boldS$, and 
%The partial ordering $\preceq_\boldS$ on the elements of $\mathcal{A}(\boldS)$ is as follows. F
for every~$\alpha,\beta\in\mathcal{A}(\boldS)$, we say that~$\beta \preceq_\boldS \alpha$ if for every~$\boldA\in\alpha$ there exists~$\boldB\in \beta$ such that~$\boldB\subseteq \boldA$.
% \begin{align}
%     \forall \alpha, \beta \in \mathcal{A}(\boldS), (\beta \preceq \alpha \longleftrightarrow  \forall \boldA \in \alpha, \exists \boldB \in \beta, \boldB \subseteq \boldA).
% \end{align}
\end{definition}
For ease of exposition, we denote elements of~$\mathcal{A}(\boldS)$ using their indices (e.g., write $\bigl\{\{S_1\}\{S_2\}\bigl\}$ as $\bigl\{\{1\}\{2\}\bigl\}$).
Based on PID~Definition~\ref{definition:PID lattice}, the value of the PI-atoms can be expressed as a partial information function $\Pi^{T}_{\boldA}$.
% which measures the information contributed only by that node (anti-chain), and not by any other node that is dominated (i.e., smaller than according to~$\preceq_\boldS$) by it.

\begin{definition} [Partial Information Decomposition Framework]
\label{pid def:PIDF}
Let~$\boldS$ be a collection of sources and let~$T$ be the target.
The set of PI-atoms is defined as a family of partial information functions (PI-function) $\Pi^{T}_{\boldA}:\mathcal{A}(\boldA) \rightarrow \mathbb{R}$ for all $\boldA \subseteq \boldS$.  
% from the perspective of system $(\boldA,T)$. 
\end{definition}
Intuitively, for every~$\alpha\in\cA(\boldA)$, the atom $\Pi^{T}_{\boldA}(\alpha)$ measures the amount of information provided by each set in the anti-chain~$\alpha$ to~$T$ and is not provided by any~$\beta \preceq_\boldA \alpha$.
%$\beta\in\cA(\boldA)$ such that~$\beta \preceq_\boldA \alpha$.

For simplicity we denote $\Pi^{T}_{i\dots}(\cdot)$ for $\Pi^{T}_{\{S_i\dots\}}(\cdot)$, e.g., $\Pi^{T}_{12}(\{\{1\}\})=\Pi^{T}_{\{S_1,S_2\}}(\{\{S_1\}\})$.
%And for the set of sources variables $\boldS=\{S_1,S_2\}$ and target variable $T$, the PI-atoms~$\operatorname{Red}, \operatorname{Un}$, and~$\operatorname{Syn}$ are associated with following notations. 
Note that in the case~$\boldS=\{S_1,S_2\}$, Definition~\ref{pid def:PIDF} reduces to
\begin{align*}
    \operatorname{Red}(S_1,\!S_2\!\to\!T)\! & =\! \Pi^{T}_{12}(\!\bigl\{\!\{\!1\!\}\!\{\!2\!\}\!\bigl\}\!), \operatorname{Un}(S_1\!\to\! T\vert S_2\!) \!=\! \Pi^{T}_{12}(\!\bigl\{\!\{1\}\!\bigl\}\!),\\
    \operatorname{Syn}(S_1,S_2 \!\to\! T)\! &=\! \Pi^{T}_{12}(\!\bigl\{\!\{12\}\!\bigl\}\!), 
    \operatorname{Un}(S_2\!\to\! T\vert S_1)\! =\! \Pi^{T}_{12}(\!\bigl\{\!\{2\}\!\bigl\}\!),
\end{align*}
recovering the terms in~\eqref{equ:Information Atoms' relationship_1} and~\eqref{equ:Information Atoms' relationship_2}.
% \begin{align*}
%      \operatorname{Red}(S_1,S_2 \!\to\! T) &= \Pi^{T}_{12}(\bigl\{\{1\}\{2\}\bigl\}),\\
%     \operatorname{Syn}(S_1,S_2 \!\to\! T) &= \Pi^{T}_{12}(\bigl\{\{1,2\}\bigl\}), \\
%     \operatorname{Un}(S_i\!\to\! T\vert S_j) &= \Pi^{T}_{12}(\bigl\{\{i\}\bigl\}), \forall i\nej\in [2].
% \end{align*}
%For three-or-more source variable systems, set of sources variables $\boldS=\{S_1,\dots,S_N\}$ and target variable $T$, there will be more types of information atoms. For the those three representations, $\operatorname{Red}$, $\operatorname{Un}$, and~$\operatorname{Syn}$, we specifically refer to the following three types of information atoms:
\begin{figure}[htbp]
\centering
\fbox{\includegraphics[width=.75\linewidth]{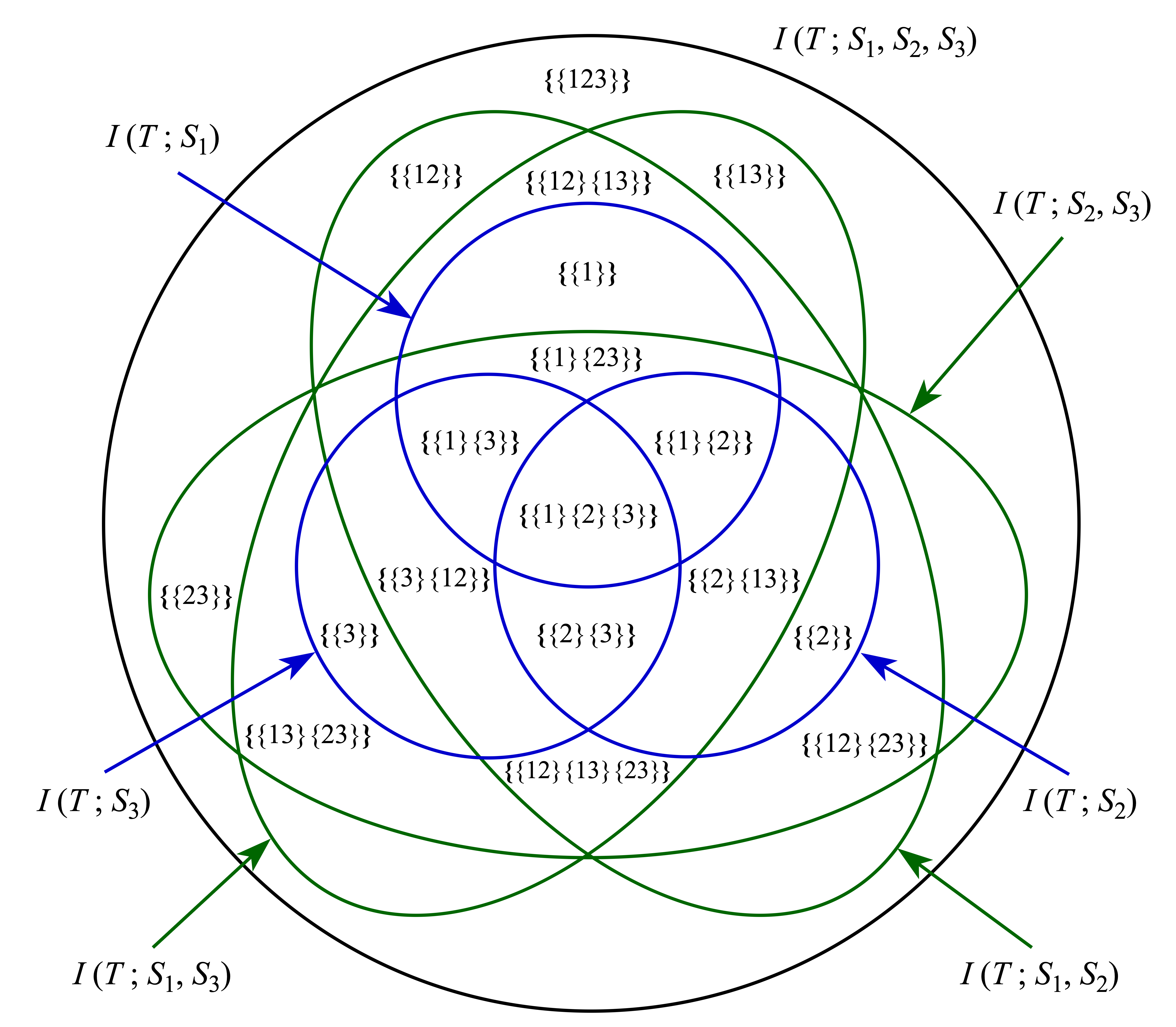 }}
\caption{The structure of PID with 3 source variables. }
\label{fig:ThreeCase}
\end{figure}
% \blue{simplifide definition.}
% For three-or-more source variable systems, we now show that there exists non-trivial systems for which any definition of the information atoms will create a contradiction between the axioms and properties.
%we show next that constructing a definition that includes these axioms and properties becomes inherently contradictory.
%In this section, we will use a counterexample to demonstrate the problem of the framework when the number of source variables is greater than 2.
%Later, we specify the axioms for the two source variable case, and provide our definition.
%and give the description of the axioms and properties in the previous section when there are two variables.
% Later, we identify the problematic ingredient in the axioms for three-or-more source variable systems and propose alternative axioms that circumvent the issue.

For general systems, PID requires the following mutual information constraints~\cite{williams2010nonnegative} (i.e., the equivalent of~\eqref{equ:Information Atoms' relationship_1} and~\eqref{equ:Information Atoms' relationship_2}).

\begin{pid axiom}
\label{pid axiom:mutual constrains}
    For any subsets $\boldA$, $\boldB$ of sources~$\boldS$ with $\boldA\subseteq \boldB$, the sum of PI-atoms decomposed from system $\boldB$ satisfies
\begin{align}
\label{equ:PID Information Atoms}
    I(\boldA;T) = \sum_{\beta \preceq_{\boldB} \{\boldA\} } \Pi^{T}_{\boldB}(\beta),
\end{align}
% \begin{align}
% \label{equ:PID Information Atoms}
%     I(\boldA;T) = \sum_{\beta \in\mathcal{O}_{\boldA} } \Pi^{T}_{\boldB}(\beta),
% \end{align}
where $\{\boldA\}$ is the antichain with a single element~$\boldA$.
\end{pid axiom}
It is worth noting that \eqref{equ:PID Information Atoms} constrains the consistency of the sum of PI-atoms decomposed from different subsystems \cite{ince2017partial,chicharro2017synergy,rosas2020operational,lizier2013towards} by taking different sets $\boldB$ on the right-hand side.
% \blue{
% \begin{lemma}[Subsystem Consistency]
% \label{lemma: subsystem consistency}
% For~$\boldA,\boldB\subseteq \boldS$ such that~$\alpha\in\cA(\boldA)\cap \cA(\boldB)$, let $\Pi$ from Definition~\ref{pid def:PIDF} that satisfying PID Axiom~\ref{pid axiom:mutual constrains}, we have that
%     \begin{align}
%     \label{equ:subsystem}
%         \sum_{\beta \preceq_\boldA \alpha} \Pi^{T}_{\boldA}(\beta)=\sum_{\beta \preceq_\boldB \alpha} \Pi^{T}_{\boldB}(\beta).
%     \end{align}    
% \end{lemma}
% }
\begin{lemma}[Subsystem Consistency]
\label{lemma: subsystem consistency}
For~$\boldA,\boldB,\boldC\subseteq \boldS$ such that~$\boldC\subseteq\boldA\cap \boldB$, let $\Pi$ from Definition~\ref{pid def:PIDF} that satisfying PID Axiom~\ref{pid axiom:mutual constrains}, we have that
    \begin{align}
    \label{equ:subsystem}
        \sum_{\beta \preceq_\boldA \{\boldC\}} \Pi^{T}_{\boldA}(\beta)=\sum_{\beta \preceq_\boldB \{\boldC\}} \Pi^{T}_{\boldB}(\beta).
    \end{align}    
\end{lemma}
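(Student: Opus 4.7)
The plan is to observe that the lemma follows almost immediately from two applications of PID Axiom~\ref{pid axiom:mutual constrains}, one for each ambient system~$\boldA$ and~$\boldB$. Since~$\boldC\subseteq\boldA$ and~$\boldC\subseteq\boldB$, the singleton antichain~$\{\boldC\}$ is a well-defined element of both~$\cA(\boldA)$ and~$\cA(\boldB)$, so the down-sets~$\{\beta:\beta\preceq_\boldA\{\boldC\}\}$ and~$\{\beta:\beta\preceq_\boldB\{\boldC\}\}$ are both well-defined and the two sides of~\eqref{equ:subsystem} each make sense.

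First I would instantiate PID Axiom~\ref{pid axiom:mutual constrains} with the axiom's~$\boldA$ set to~$\boldC$ and its~$\boldB$ set to~$\boldA$ (legal since~$\boldC\subseteq \boldA$), which yields
\begin{align*}
    I(\boldC;T) = \sum_{\beta\preceq_\boldA\{\boldC\}} \Pi^{T}_{\boldA}(\beta).
\end{align*}
Next I would apply the same axiom with the ambient system replaced by~$\boldB$, justified by~$\boldC\subseteq\boldB$, to obtain
\begin{align*}
    I(\boldC;T) = \sum_{\beta\preceq_\boldB\{\boldC\}} \Pi^{T}_{\boldB}(\beta).
\end{align*}
Equating the two right-hand sides through the common value~$I(\boldC;T)$ gives~\eqref{equ:subsystem}.

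There is essentially no obstacle: the content of the lemma is precisely that PID Axiom~\ref{pid axiom:mutual constrains} pins every ``downward sum to~$\{\boldC\}$'' to the same mutual information~$I(\boldC;T)$, regardless of which lattice~$\cA(\boldA)$ or~$\cA(\boldB)$ the sum is carried out in. The only subtlety worth flagging is to verify that~$\{\boldC\}$ is indeed an antichain inside both~$\cA(\boldA)$ and~$\cA(\boldB)$ (trivial, since it is a singleton of a nonempty subset), and that the use of the axiom does not require any compatibility between the two orderings~$\preceq_\boldA$ and~$\preceq_\boldB$---each sum is evaluated within its own lattice and collapses independently to~$I(\boldC;T)$.
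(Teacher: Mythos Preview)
Your proposal is correct and matches the paper's treatment: the paper states Lemma~\ref{lemma: subsystem consistency} without an explicit proof, immediately after PID Axiom~\ref{pid axiom:mutual constrains}, precisely because it is an immediate consequence of applying that axiom once with ambient system~$\boldA$ and once with~$\boldB$, both yielding~$I(\boldC;T)$. There is nothing to add.
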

% The decomposition in \eqref{equ:subsystem} constrains the consistency of the sum of PI-atoms decomposed from different subsystems. 
% This consistency of subsystem decomposition is considered part of the axioms that should be satisfied in the field of information decomposition \cite{ince2017partial,chicharro2017synergy,rosas2020operational,lizier2013towards}. 
Consider the system in Figure~\ref{fig:PID}.
For the atoms decomposed from the system $(S_1,T)$, the quantity $\Pi^{T}_{1}(\bigl\{\{1\}\bigl\})$ reflects the (redundant) information that $S_1$ provides about~$T$.
If we add a source~$S_2$ to this system, this information will be further decomposed into the redundant information from $S_1,S_2$ and the unique information only from $S_1$ but not $S_2$. 
Below are three axioms regarding the redundant information $\operatorname{Red}(S_1,\dots,S_N\to T)$---which is reflected by the PI-atom $\Pi^{T}_{\boldS}(\bigl\{\{1\}\dots\{N\}\bigl\})$---for any multivariate system~$\boldS$.
% =\{S_1,\dots,S_N\}$. 
\begin{pid axiom} [Commutativity\footnote{We note that in general, commutativity is an axiom that follows from the intuitive understanding of redundancy. 
Yet, when using the anti-chain perspective, it is trivially satisfied since the respective atom~$\Pi^T_{\boldS}(\{\{1\}\{2\}\ldots\{N\}\})$ depends on the anti-chain of all singletons, which is not affected by reordering.}]
\label{PID Axiom: Commutativity}
Redundant information is invariant under any permutation $\sigma$ of sources, i.e., $\operatorname{Red}(S_1,\dots,S_N\to T) = \operatorname{Red}(S_{\sigma(1)}, \dots, S_{\sigma(N)}\to T)$.
% \nonumber
%\end{align}
%where $\sigma$ is a bijective permutation that maps the set $\{1,\dots,N\}$ onto itself.
% where $\sigma$ is a permutation of $\{1,\dots,N\}$.
\end{pid axiom}  
%\begin{remark}

%\end{remark}

% The second is \textit{monotonicity}, which implies that the redundant information is non-increasing when adding a source variable, since the newly added variable cannot increase the redundancy between the original variables.

\begin{pid axiom} [Monotonicity]
\label{pid axiom: Monotonicity}
Redundant information decreases monotonically as more sources are included, i.e., %\red{[See remark on Observation~\ref{def:PIDF}; what's the reference set here?]} 
$\operatorname{Red}(S_1,\dots,S_{N},S_{N+1} \to T) \le \operatorname{Red}(S_1,\dots,S_{N} \to T)$.
%\nonumber
% , \forall n \in \{1,\dots,N\}
%\end{align}
\end{pid axiom}

% The third axiom is \textit{self-redundancy}, which defines the redundant information from \textit{one} source variable to the target variable as the mutual information between the two.
\begin{pid axiom} [Self-redundancy]
\label{pid axiom: Self-redundancy}
Redundant information for a single source variable $S_i$ equals the mutual information, i.e., %\red{[Reference set? Probably doesn't matter here, but should still be mentioned.]}
$\operatorname{Red}(S_i \to T) = I(S_i;T).$
\end{pid axiom}
Axiom~\ref{pid axiom: Monotonicity} also implies another lemma, as follows.
%In addition to those stated axioms, there is another implicit yet often overlooked derived lemma from Axiom~\ref{axiom: Monotonicity}. 

\begin{lemma} [Nonnegativity]
\label{Lemma: Nonnegativity}
Partial Information Decomposition satisfies $\operatorname{Red}(S_1,\dots,S_N \to T) \ge 0$.
\begin{proof}
Add a constant variable $S^*$
% , where $H(S^*)=0$, 
to the sources and obtain $\operatorname{Red}(\boldA \to T) \ge \operatorname{Red}(\boldA,S^* \to T) =0$. 
\end{proof}
\end{lemma}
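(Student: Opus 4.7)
The plan is to derive nonnegativity from Monotonicity (Axiom~\ref{pid axiom: Monotonicity}) and Self-redundancy (Axiom~\ref{pid axiom: Self-redundancy}) by adjoining a dummy auxiliary source. Specifically, I would introduce a deterministic (almost surely constant) source $S^*$, which satisfies $I(S^*;T)=0$ for every target $T$, and then sandwich $\operatorname{Red}(\boldA\to T)$ from below via the enlarged system $\boldA\cup\{S^*\}$.

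First, Self-redundancy applied to the singleton gives $\operatorname{Red}(S^*\to T)=I(S^*;T)=0$. Next, invoking Monotonicity along the two inclusions $\boldA\subseteq \boldA\cup\{S^*\}$ and $\{S^*\}\subseteq \boldA\cup\{S^*\}$ yields
\begin{equation*}
\operatorname{Red}(\boldA\to T)\;\ge\;\operatorname{Red}(\boldA\cup\{S^*\}\to T)\quad\text{and}\quad \operatorname{Red}(\boldA\cup\{S^*\}\to T)\;\le\;\operatorname{Red}(S^*\to T)=0,
\end{equation*}
which together imply $\operatorname{Red}(\boldA\cup\{S^*\}\to T)\le 0$ and $\operatorname{Red}(\boldA\to T)\ge \operatorname{Red}(\boldA\cup\{S^*\}\to T)$. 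Chaining these would give the desired conclusion provided one can upgrade the middle quantity to an exact equality with zero.

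The main obstacle is precisely this upgrade: the three stated axioms by themselves only cap $\operatorname{Red}(\boldA\cup\{S^*\}\to T)$ from above by zero, they do not a priori forbid it from being strictly negative. I would close the gap either by appealing to an operational instantiation of redundancy---for instance, the Gács-Körner-based extension considered later in Section~\ref{sec:SID}, under which a degenerate source contributes no common information with anyone---or by adopting the natural convention that the bottom atom of the lattice is invariant under adjoining a trivial variable (equivalently, that redundancy is a function of the joint law of the non-degenerate components). With this identification $\operatorname{Red}(\boldA\cup\{S^*\}\to T)=0$ in hand, the Monotonicity chain of the previous paragraph collapses to $\operatorname{Red}(\boldA\to T)\ge 0$, completing the proof.
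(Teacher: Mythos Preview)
Your approach matches the paper's exactly: adjoin a constant $S^*$ and combine Monotonicity with the vanishing of $\operatorname{Red}(\boldA,S^*\to T)$. The paper's entire proof is the single line ``$\operatorname{Red}(\boldA \to T) \ge \operatorname{Red}(\boldA,S^* \to T) =0$,'' and it does not justify the final equality any further than you do---so the gap you carefully flag (that Axioms~\ref{PID Axiom: Commutativity}--\ref{pid axiom: Self-redundancy} by themselves only force $\operatorname{Red}(\boldA,S^*\to T)\le 0$, not $=0$) is present in the paper as well, just left implicit. In practice this equality is taken as part of the intended semantics of redundancy (a constant source shares no information about $T$ with anything, hence the information common to all sources including $S^*$ is nil), and your proposed remedies---an operational instantiation or a degeneracy convention---are the standard ways to make that precise.
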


Besides, another intuitive property is often considered \cite{ince2017measuring}.
\begin{property}[Independent Identity]
\label{property: Independent Identity}
If $I(S_1;S_2)=0$ and $T=(S_1,S_2)$, then $\operatorname{Red}(S_1,S_2\to T) = 0$.
\end{property}

\begin{remark}
\label{remark:def equal}
% \red{From an information theoretic perspective, there is no difference between ``$T=(S_1,S_2)$'' in Property~\ref{property: Independent Identity} and ``$H(T|S_1,S_2)=H(S_1,S_2|T)=0$.''
    % For brevity we denote $H(T|S_1,S_2)=H(S_1,S_2|T)=0$ by $T \overset{\text{det}}{=} (S_1, S_2)$.}
From information perspective, there is no difference between ``$T = (S_1, S_2)$'' and ``$H(T|S_1, S_2)=H(S_1,\!S_2|T )\!=\!0$,'' which 
we denote by $T\! \overset{\text{det}}{=}\! (S_1, \!S_2)$ for brevity.
\end{remark}

% was first established by Rauh et al \cite[Thm.~2]{rauh2014reconsidering}.
% To set the stage for our results, we briefly recall this finding and interpret it through the lens of information synergy. We retain the counterexample construction explicitly because it will serve as a recurrent part in what follows

However, this framework inherently becomes contradictory for three or more source variables, as shown in \cite[Thm.~2]{rauh2014reconsidering}. To set the stage for our results, we briefly recall this finding of the following system and Lemma~\ref{lemma: counter example} proved in Appendix~\ref{app:counter example}.
% We show the following same system, and Lemma~\ref{lemma: counter example} is proved in Appendix~\ref{app:counter example}.
\paragraph*{System (\(\bar{S}_1,\bar{S}_2,\bar{S}_3,\bar{T}\))}
Let \(x_1\) and \(x_2\) be two independent \(\operatorname{Bernoulli}(1/2)\) variables, and
$x_3 = x_1 \oplus x_2$. Define $\bar{S}_1 = x_1, \bar{S}_2 = x_2, \bar{S}_3 = x_3$, and target $\bar{T} = (x_1, x_2, x_3)$.
\begin{lemma} {\cite{rauh2014reconsidering}}
\label{lemma: counter example}
    For the system $(\bar{S}_1,\bar{S}_2,\bar{S}_3,\bar{T})$, any candidate PID measure $\Pi^{\bar{T}}_{\boldA}:\mathcal{A}(\boldA) \rightarrow \mathbb{R}, \forall \boldA \subseteq \bar{\boldS}$ that satisfies PID Axioms~\ref{PID Axiom: Commutativity}, \ref{pid axiom: Monotonicity}, \ref{pid axiom: Self-redundancy}, Property~\ref{property: Independent Identity}, and PID Axiom~\ref{pid axiom:mutual constrains} for any $|\boldA|\le2$,
    %two source variable $\boldA\subsetnoteq\bar{\boldS}$, $|\boldA|\le2$, 
    violates PID Axiom~\ref{pid axiom:mutual constrains} for~$|\boldA|=3$, i.e.,
    \begin{align*}
I(T;\bar{\boldS}) < \sum_{\beta \preceq_{\bar{\boldS}} \{\bar{\boldS}\} } \Pi^{\bar{T}}_{\bar{\boldS}}(\beta).
\end{align*}
\end{lemma}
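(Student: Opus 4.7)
The plan is to leverage the $S_3$-symmetry of the XOR system and the deterministic equivalences $\bar T \overset{\text{det}}{=} (\bar S_i, \bar S_j)$ to reduce the three-source decomposition to a small linear program, then force $\Sigma := \sum_{\beta \preceq_{\bar\boldS} \{\bar\boldS\}} \Pi^{\bar T}_{\bar\boldS}(\beta)$ strictly above $I(\bar T; \bar\boldS) = 2$. First I compute the base quantities $H(\bar T) = 2$, $I(\bar S_i; \bar T) = 1$, and $I(\bar S_i, \bar S_j; \bar T) = 2$, and observe that every pair is pairwise independent with $\bar T \overset{\text{det}}{=} (\bar S_i, \bar S_j)$ by Remark~\ref{remark:def equal}. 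Property~\ref{property: Independent Identity} then gives $\operatorname{Red}(\bar S_i, \bar S_j \to \bar T) = 0$ in every two-source subsystem, and PID Axiom~\ref{pid axiom:mutual constrains} at $|\boldA| = 2$ pins down $\operatorname{Un}(\bar S_i \to \bar T \vert \bar S_j) = 1$ and $\operatorname{Syn}(\bar S_i, \bar S_j \to \bar T) = 0$. Axiom~\ref{pid axiom: Monotonicity} combined with Lemma~\ref{Lemma: Nonnegativity} further forces $\Pi^{\bar T}_{\bar\boldS}(\{\{1\}\{2\}\{3\}\}) = \operatorname{Red}(\bar S_1, \bar S_2, \bar S_3 \to \bar T) = 0$.

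Next I invoke PID Axiom~\ref{pid axiom:mutual constrains} with $\boldB = \bar\boldS$ and $\boldA$ a singleton or a pair to obtain two symmetric families of linear constraints on the $18$ PI-atoms of $\cA(\bar\boldS)$: sums over $\beta \preceq_{\bar\boldS} \{\{i\}\}$ equal $1$, and sums over $\beta \preceq_{\bar\boldS} \{\{i,j\}\}$ equal $2$. Axiom~\ref{PID Axiom: Commutativity} together with the full $S_3$-symmetry of the XOR system collapses these $18$ atoms into $8$ orbit representatives, so $\Sigma$ becomes a linear combination of these orbit atoms subject to only two linear equality constraints.

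The final step is to rule out $\Sigma \le 2$. Using the two constraints and nonnegativity of all PI-atoms, a short rearrangement expresses $\Sigma$ as $2$ plus a nonnegative combination of four orbit atoms, so $\Sigma \ge 2$ with equality only if those four atoms all vanish. The key leverage for strictness comes from the ``pairwise'' antichain $\{\{1,2\}\{1,3\}\{2,3\}\}$: since each of its elements $(\bar S_i, \bar S_j)$ deterministically equals $\bar T$, applying Axiom~\ref{pid axiom: Self-redundancy} and Axiom~\ref{pid axiom: Monotonicity} to these compound sources forces the cumulative $I_\cap$ of this antichain to coincide with $H(\bar T) = 2$; this value, together with the earlier constraints, prevents all four of the ``equality-case'' atoms from vanishing simultaneously and so drives $\Sigma$ strictly above $2$. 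The principal obstacle is precisely this last step: the stated axioms only formally address singleton-source redundancy, so extending them rigorously to the ``compound'' sources $(\bar S_i, \bar S_j)$---the technical heart of the argument---requires the same careful lattice bookkeeping as the Rauh et al.~\cite{rauh2014reconsidering} construction on which the lemma is based.
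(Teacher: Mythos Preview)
Your route is significantly more convoluted than the paper's and leans on assumptions the stated axioms do not provide. The paper proceeds directly: after obtaining $\Pi^{\bar T}_{123}(\{\{1\}\{2\}\{3\}\})=0$ and $\Pi^{\bar T}_{123}(\{\{i\}\{j\}\})=0$ (as you also do), it invokes PID Axiom~\ref{pid axiom:mutual constrains} with $\boldA=\{\bar S_i\}$, $\boldB=\bar\boldS$---precisely the singleton constraint you write down in your step~4---and, via the subsystem-consistency identity (Corollary~\ref{corollary: two result}), reads off
\[
\Pi^{\bar T}_{123}(\{\{i\}\})+\Pi^{\bar T}_{123}(\{\{i\}\{jk\}\})=1\qquad\text{for each }i\in\{1,2,3\}.
\]
Summing over $i$ shows that these six atoms alone already contribute $3>2=I(\bar T;\bar\boldS)$, and the violation of Axiom~\ref{pid axiom:mutual constrains} at $|\boldA|=3$ follows at once. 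No symmetry reduction, no orbit bookkeeping, and no compound-source reasoning is needed.

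By contrast, your plan (i) collapses the $18$ atoms to $8$ orbit representatives via the $S_3$-symmetry of the XOR system, but Axiom~\ref{PID Axiom: Commutativity} only asserts commutativity of the redundancy function, not invariance of arbitrary PI-atoms under automorphisms of the underlying distribution; (ii) writes $\Sigma=2+{}$``a nonnegative combination of orbit atoms,'' yet Lemma~\ref{Lemma: Nonnegativity} establishes nonnegativity only for redundancy, not for generic PI-atoms; and (iii) seeks strictness through the antichain $\{\{12\}\{13\}\{23\}\}$ by applying Axioms~\ref{pid axiom: Monotonicity} and~\ref{pid axiom: Self-redundancy} to the \emph{compound} sources $(\bar S_i,\bar S_j)$---a step you yourself flag as the ``principal obstacle,'' and which is not licensed by the hypotheses as stated. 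All three detours are unnecessary: the singleton constraints you already record in step~4, combined with the four vanishing bottom atoms, force the half-lattice contribution to equal $3$, which is exactly the paper's contradiction; your weaker intermediate bound $\Sigma\ge 2$ and the subsequent struggle for strictness are symptoms of not exploiting those constraints tightly enough.
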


The redundancy lattice of PID was built under the assumption that mutual information could be partitioned into disjoint atoms summing to the total, i.e., WESP. However, Lemma~\ref{lemma: counter example} demonstrates that this assumption is incompatible with synergistic phenomena in multivariate systems.
This phenomena is inherent to multivariate information measures: the sum of synergistic components can exceed the total entropy~\cite{lyu2023system}, which is described rigorously in the following observation. 
% \red{[Have short version.]}
% \red{{Or: Lemma~\ref{lemma: counter example} demonstrates that PID’s assumption—“the whole equals the sum of its parts” (WESP)—is incompatible with synergistic phenomena in multivariate systems~\cite{lyu2023system}. Specifically, the sum of synergistic components can exceed the total entropy.]}
% Lemma~\ref{lemma: counter example} reflects a fundamental misconception in PID frameworks, that the whole is equal to the sum of its decomposed parts.
% The lattice structure conforms with the intuition that it should be possible to decompose the mutual information into disjoint atoms whose sum is that mutual information (e.g. PID Axiom~\ref{pid axiom:mutual constrains}).
% However, the proof of Lemma~\ref{lemma: counter example} shows that even for a simple three-source variable system, the sum of the PI-atoms may not be equal to the mutual information. 
% created by taking every variable in a system as target variable separately (with others as sources) 
\begin{observation}[Synergistic Phenomena \cite{lyu2023system}] 
\label{observation:HleP}
For any system $\boldX = \{X_1,\!\dots\!,X_N\}$, denote $\operatorname{Syn} (\boldX/\hat{X}_{i}\to\! X_i)$ as the information that can only be provided jointly from sources $\boldX/\hat{X}_{i}$ to target $X_i$, the summation of all those information can be greater than the joint entropy of the system, i.e., in general
%it is \emph{not} always the case that
\begin{align}\label{equation:synEffectLessThanEntropy}
% \sum_{i \in \{1,\dots,N\}} \operatorname{Syn} (X_1,\dots\hat{X}_i\dots,X_N \!\to\! X_i) \le H(X_1,\dots,X_N).
\sum_{i \in \{1,\dots,N\}} \operatorname{Syn} (\boldX /X_i \!\to\! X_i) \nleqslant H(\boldX).
\end{align}
In particular, in the system given in Lemma~\ref{lemma: counter example} we have that the l.h.s of~\eqref{equation:synEffectLessThanEntropy} equals~$3$, whereas the r.h.s equals~$2$.
\end{observation}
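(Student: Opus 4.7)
The plan is to verify the specific numerical claim for the XOR system of Lemma~\ref{lemma: counter example}, which simultaneously produces a counterexample to the inequality $\sum_i \operatorname{Syn}(\boldX/X_i\to X_i) \le H(\boldX)$. First, I would note that $H(\boldX) = H(x_1,x_2,x_3) = H(x_1,x_2) = 2$, since $x_3 = x_1\oplus x_2$ is deterministic given $(x_1,x_2)$ and $x_1,x_2$ are independent uniform Bernoulli.

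Next, I would compute each of the three synergy terms. By the cyclic symmetry of the XOR relation (every variable equals the XOR of the other two), it suffices to analyze $\operatorname{Syn}(\{x_2,x_3\}\to x_1)$. Two ingredients drive the computation: (i) $x_2$ and $x_3$ are each marginally independent of $x_1$, so $I(x_2;x_1)=I(x_3;x_1)=0$; and (ii) $(x_2,x_3)$ jointly determines $x_1$, so $I(x_2,x_3;x_1) = H(x_1)=1$. Applying PID Axiom~\ref{pid axiom:mutual constrains} to the singleton subsystems $\{x_2\}$ and $\{x_3\}$ gives
\begin{align*}
\operatorname{Red}(x_2,x_3\to x_1)+\operatorname{Un}(x_2\to x_1|x_3) &= 0,\\
\operatorname{Red}(x_2,x_3\to x_1)+\operatorname{Un}(x_3\to x_1|x_2) &= 0,
\end{align*}
which, combined with nonnegativity of PID atoms (Lemma~\ref{Lemma: Nonnegativity}), forces the redundancy and both unique atoms to vanish. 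The whole two-source decomposition~\eqref{equ:Information Atoms' relationship_1} then yields $\operatorname{Syn}(\{x_2,x_3\}\to x_1) = I(x_2,x_3;x_1) = 1$. Cyclic symmetry gives the same value for $i=2,3$.

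Summing over $i$ produces $\sum_{i=1}^{3}\operatorname{Syn}(\boldX/X_i\to X_i) = 3 > 2 = H(\boldX)$, proving the specific numerical claim and, as a by-product, refuting~\eqref{equation:synEffectLessThanEntropy} in general. I do not expect a serious technical obstacle---all the computations are routine---but the methodological point worth emphasizing is that the argument invokes only Axiom~\ref{pid axiom:mutual constrains} and nonnegativity, so that the counterexample is \emph{universal} across PID measures rather than tied to any particular redundancy definition.
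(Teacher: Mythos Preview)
The paper does not give a standalone proof of this observation; it is cited from~\cite{lyu2023system} and the values $3$ and $2$ are simply asserted. Your route---compute $H(x_1,x_2,x_3)=2$, then use the two-source PID constraints~\eqref{equ:Information Atoms' relationship_1}--\eqref{equ:Information Atoms' relationship_2} together with pairwise independence to isolate each $\operatorname{Syn}$ term as~$1$---is the natural one and reaches the right conclusion.

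There is, however, a small slip in the justification. You write ``nonnegativity of PID atoms (Lemma~\ref{Lemma: Nonnegativity})'', but Lemma~\ref{Lemma: Nonnegativity} only asserts $\operatorname{Red}\ge 0$; it says nothing about the unique atoms. From $\operatorname{Red}+\operatorname{Un}(x_2\to x_1\,|\,x_3)=0$ and $\operatorname{Red}\ge 0$ alone you cannot force both to vanish (you only get $\operatorname{Un}\le 0$). The clean repair is to use Monotonicity (PID Axiom~\ref{pid axiom: Monotonicity}) and Self-redundancy (PID Axiom~\ref{pid axiom: Self-redundancy}) to bound redundancy from above,
\[
\operatorname{Red}(x_2,x_3\to x_1)\le \operatorname{Red}(x_2\to x_1)=I(x_2;x_1)=0,
\]
and then combine with Lemma~\ref{Lemma: Nonnegativity} to get $\operatorname{Red}=0$; the unique terms then vanish directly from the constraint equations, and $\operatorname{Syn}=1$ follows from~\eqref{equ:Information Atoms' relationship_1}. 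With this patch the argument is complete, and your closing remark that only Axiom~\ref{pid axiom:mutual constrains} plus the redundancy axioms are used (so the conclusion is measure-independent) remains valid.
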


% In summary, the synergy among variables creates a situation where the sum of certain information components exceeds the information of the whole, violating the ‘whole equals the sum of parts’ (WESP) principle.
% \red{[Find a simpler and more direct way to say this.]}. \blue{\checkmark}
% , as noted in~\cite{kolchinsky2022novel}. 
% This issue also arises in synergistic relationships among three or more source variables.
% % , where synergy emerges among sets of source variables in their contribution to another variable.
% % \red{[Not strictly true, there is synergy in two-source systems]}. 
% % However, the issue is manifested in scenarios involving three-or-more source variables, in particular, since antichains that include more than two sets exist. 
% Consider again the system from Lemma~\ref{lemma: counter example}, the PI atom $\Pi^{T}_{123}(\bigl\{\{1\}\{23\}\bigl\})$ represents the information generated jointly by $S_2$ and $S_3$ that is also provided independently by $S_1$. This atom can be viewed as part of the information about $S_1$ that comes from the synergy of $S_2$ and $S_3$. Together with the analogous atoms $\Pi^T_{123}(\bigl\{\{2\}\{13\}\bigl\})$ and $\Pi^T_{123}(\bigl\{\{3\}\{12\}\bigl\})$, it reflects a three-way synergy among the sources – leading to the phenomenon that the whole is less than the sum of its parts. 

In summary, PID’s reliance on the WESP principle (Axiom~\ref{pid axiom:mutual constrains}) is incompatible with the existence of synergy. 
Observation~\ref{observation:HleP} quantifies this incompatibility: the sum of synergistic components can exceed the whole’s entropy. 
Thus, to fix our decomposition, we must relax or modify Axiom~\ref{pid axiom:mutual constrains}. 
% \red{In the next section, we introduce a new framework that refines the relationship between the whole and its parts, explicitly accounting for synergy, in order to resolve this issue for a large family of three variable systems.}
In the next section, we propose a new framework addressing this limitation explicitly for three-variable systems.
% Fortunately, since PID is a directed framework from sources to a target, all antichains that include only one set are unaffected by this issue, as PID does not account for any synergistic effects directed from the target variable to any source variables.
% In response to this question, in the next section we will introduce a target-free framework for any three-variable system called System Information Decompesition (SID) with the precise description of the relationship between the whole and its parts.
% \blue{[PID Axiom~\ref{pid axiom:mutual constrains} imposes the constraint that the whole information must equal the sum of its decomposed parts. 
%, as explicitly shown in \eqref{equ:PID Information Atoms}. 
% However, Observation~\ref{observation:HleP} reveals that this axiom inherently contradicts the synergistic properties of information, leading to situations where the sum of synergistic components can exceed the total entropy of the system. %, thereby violating the fundamental set-theoretic principle. 
% To resolve this inconsistency, it is necessary to precisely characterize the relationship between the entropy of the whole system and the sum of synergistic information atoms. Therefore, we first refine and quantify the inequality relationship presented in Observation~\ref{observation:HleP}.] }
\section{Three-variable System Information Decomposition}
\label{sec:SID}
% % \subsection{Framework}
% To explore the phenomenon in Observation~\ref{observation:HleP}, we need to quantitatively characterize the sum of multiple synergistic PI-atoms in \eqref{equation:synEffectLessThanEntropy}. Considering that these synergistic PI atoms exist in different target variables in the system, we first need to expand the decomposition scope of PID so that it can decompose the entropy of all variables in the system, as shown below:
% \red{To achieve this goal, we introduce the System Information Decomposition (SID) framework, which generalizes the decomposition beyond a single target variable and encompasses the entropy of all variables within the system into system information atoms (SI-atoms), as detailed in the following definitions. [Mention that we consider a special case of three source variables and a target, in a footnote we can mention that this generalizes the two source one target case.]} \blue{\checkmark}
In this section we resolve the issue from Section~\ref{sec:PID} for the case of three source variables $S_1,S_2,S_3$, and~$T=(S_1,S_2,S_3)$. 
% \red{This scenario highlights the differences between information-theoretic and set-theoretic relationships in a tangible way. [Too abstract, off the point, remove.]}
When $T = (S_1,S_2,S_3)$, the PID of $I(T; S_1,S_2,S_3)$ amounts to a decomposition of the joint entropy $H(S_1,S_2,S_3)$.
We define a new framework in which Axiom~\ref{pid axiom:mutual constrains} is replaced by a summation over a subset of the atoms, in a way which circumvents the overcounting in Observation~\ref{observation:HleP}.
We make use of the following lattice.

\begin{definition}[SID Half Lattice]
\label{definition:SID_half_lattice}
%The SI-atoms within SID are associated with a reduced set of antichains and a corresponding partial order relation. 
For $\boldS = \{S_1,S_2,S_3\}$, let
\begin{align}\label{equation:lattice}
\mathcal{A}^*(\boldS) =& \{\alpha \in \mathcal{P}_1(\mathcal{P}_1(\boldS)):\nonumber \\\exists& \boldA_k \in \alpha, |\boldA_k|=1,\forall \boldA_i,\boldA_j \in \alpha, \boldA_i \not \subset \boldA_j\}, \\
=&\big\{
\,\{\{1\}\{2\}\{3\}\},\{\{1\}\{2\}\},\{\{1\}\{3\}\},\{\{2\}\{3\}\},\nonumber \\
\{\{1\}\{23&\}\},\{\{2\}\{13\}\}, \{\{3\}\{12\}\},\{\{1\}\},\{\{2\}\},\{\{3\}\}\, \big\}. \nonumber 
\end{align}
where $\mathcal{P}_1(\boldS) $ and $\preceq_\boldS$ are as in Definition~\ref{definition:PID lattice}.
%with the same abbreviated notation. 
%\red{[Give an explicit list of~$\cA^*(\boldS)$.]}\blue{$\checkmark$}
\end{definition} 

%\red{[There isn't really a definition here, only a notation. If you agree, remove the red text and keep the blue. -NR]}
%\red{With the above lattice, we define a new set of information atoms, termed \textit{System Information Decomposition} (SID) atoms, as follows.}
% SI-atoms within SID can be formulated analogously to PI-atoms in the PID framework as follows.
%\blue{[Obo: Does definition~\ref{pid def:PIDF} also need to be rewritten?]}

For every~$\boldA\subseteq\boldS$ and every~$\alpha\in\cA^*(\boldA)$, our aim is to measure the information contributed by every subset in~$\alpha$ to the whole system $\boldA$, which is not already accounted for by any antichain~$\beta\in\cA^*(\boldA)$ such that~$\beta \preceq_\boldA \alpha$.

\begin{definition} [System Information Decomposition Framework]
\label{sid def:SIDF}
Let~$\boldS$ be a collection of variables.
The set of system information atoms (SI-atoms) is defined as a family of system information functions $\Psi_{\boldA} :\mathcal{A^*}(\boldA) \rightarrow \mathbb{R}$ for all $\boldA \subseteq \boldS$. 
\end{definition}
The SID half lattice can be understood as a refinement of the PID redundancy lattice for three sources (Definition~\ref{definition:PID lattice}), by removing all antichains that do not contain any singleton source (see Figure~\ref{fig:SID_3PID}(B)).
See Appendix~\ref{app:compairson} for further comparison between SID and two-source PID.%, \blue{since any truly joint-only information becomes part of the synergy captured elsewhere.}
\begin{figure}[htbp]
\centering
\fbox{\includegraphics[width=0.97\linewidth]{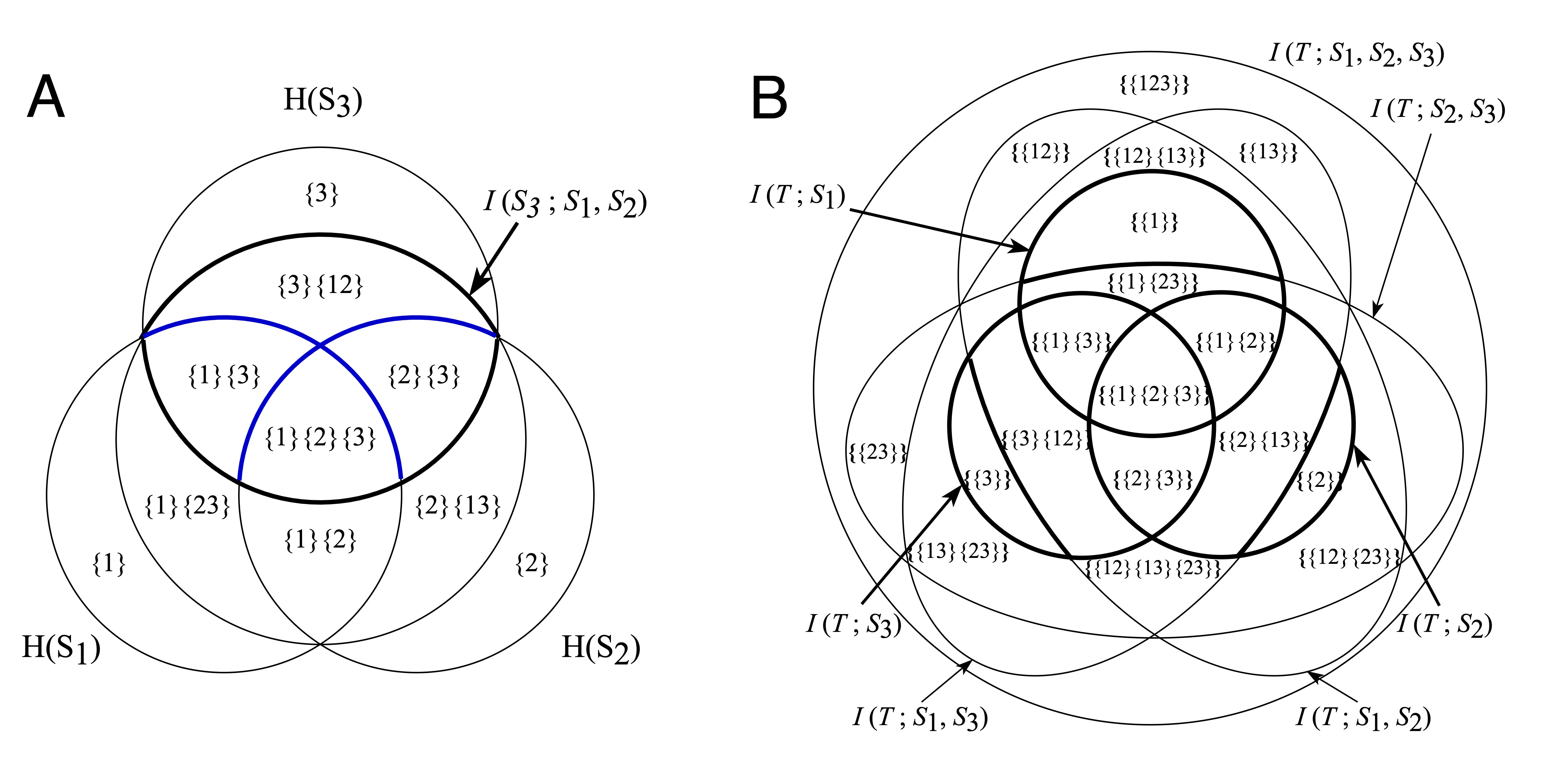 }}
\caption{Comparison between SID and three sources PID. (A) Three-variable SID.
% , where the bold black part represents the joint mutual information $I(S_1,S_2;S_3)$. 
(B) Three-sources PID, where the antichains in bold contain at least one singleton source, whose structure is consistent with SID.}
\label{fig:SID_3PID}
\end{figure}
The original PID Axioms~\ref{PID Axiom: Commutativity}, \ref{pid axiom: Monotonicity}, and~\ref{pid axiom: Self-redundancy}, are required for SID as well. 
PID Axiom~\ref{pid axiom:mutual constrains}, which leads to the inconsistency demonstrated in Lemma~\ref{lemma: counter example}, will be modified shortly.
Similar to PID, we define SID redundant information
% ~$\operatorname{Red}(\boldA)$,  
as $\operatorname{Red}(S_1,S_2,S_3)=\Psi_{\boldS}(\{\{S_1\}\{S_2\}\{S_3\}\})$, and for all distinct $i,j$ in $\boldS$, let $\operatorname{Red}(S_i,S_j)=\Psi_{\{S_i,S_j\}}(\{\{S_i\}\{S_j\}\})$.
% \begin{align*}
%     \operatorname{Red}(S_1,S_2,S_3)&=\Psi_{\boldS}(\{\{S_1\}\{S_2\}\{S_3\}\}),\text{ and}\\
%     \operatorname{Red}(S_i,S_j)&=\Psi_{\{S_i,S_j\}}(\{\{S_i\}\{S_j\}\}), \forall i \ne j \in \boldS.
% \end{align*}
%have been modified accordingly for the three-variable case.

%Here are three axioms for SID redundant information (atoms) $\operatorname{Red}(\boldS)$, which refer to the SI-atom $\Psi_{\boldS}(\{\{i\}|S_i\in \boldS\})$ for any system $\boldS \subseteq \{S_1,S_2,S_3\}$.

% The first one is \textit{strong symmetry}, which generalizes PID Axiom~\ref{PID Axiom: Commutativity} from the permutations of two source variables to all three variables.

% The expansion of the range of information decomposed by SID is essentially based on a newly introduced axiom, namely the strong symmetry of redundant information, which replaces PID Axiom~\ref{PID Axiom: Commutativity}. For any system $\boldS=\{S_1,S_2,S_3\}$, we use redundant information (redundancy), $\operatorname{Red}$, to refer to the following information atom.
% \begin{align*}
%     \operatorname{Red}(S_{1},S_2,S_{3}) &\triangleq \Psi_{123}(\{1\}\{2\}\{3\})
% \end{align*}

% Then, the newly introduced axiom is described as
\setcounter{sid axiom}{1}
\begin{sid axiom}[Commutativity]
\label{sid axiom: commutativity}
%The SID redundant information $\operatorname{Red}(S_1,S_2,S_3)$ is invariant under any permutation $\sigma$ of the source indices, i.e., for every permutation $\sigma$ on $\{1,2,3\}$,
SID redundant information is invariant under any permutation $\sigma$ of sources, i.e., $    \operatorname{Red}(S_1,S_2,S_3) = \operatorname{Red}(S_{\sigma(1)},S_{\sigma(2)},S_{\sigma(3)})$.
\end{sid axiom}

% Based on this new axiom, except for the PID Axiom~\ref{pid axiom:mutual constrains} that causes the problem in Lemma~\ref{lemma: counter example}, the rest of the axioms (PID Axiom~\ref{pid axiom:subsystem consistency}, \ref{pid axiom: Monotonicity}, \ref{pid axiom: Self-redundancy}) are also inherited follows:

% \begin{sid axiom} [Subsystem Consistency]
% \label{sid axiom:subsystem consistency}
% For~$\boldA,\boldB\subseteq \boldS=\{S_1,S_2,S_3\}$ such that~$\alpha\in\cA^*(\boldA)\cap \cA^*(\boldB)$ we have that
%     \begin{align}
%     \label{equ:subsystem}
%         \sum_{\beta \preceq_\boldA \alpha} \Psi_{\boldA}(\beta)=\sum_{\beta \preceq_\boldB \alpha} \Psi_{\boldB}(\beta).
%     \end{align}
% \end{sid axiom}

% In addition, PID Axiom~\ref{pid axiom: Monotonicity} and~\ref{pid axiom: Self-redundancy} are combined into the following single axiom for the three-variable case:

% In addition, PID Axiom~\ref{pid axiom: Monotonicity} and \ref{pid axiom: Self-redundancy} are retained but are conbined into the following axioms in the three-variable case:

\begin{sid axiom} [Monotonicity]
\label{sid axiom: Monotonicity}
SID redundant information decreases monotonically as more sources are included, i.e., 
%of three variables does not exceed the redundant information of any two, i.e., 
$\operatorname{Red}(S_1,S_2,S_3) \le \min_{i,j\in[3]}\{\operatorname{Red}(S_i,S_j)\}$.
\end{sid axiom}

\begin{sid axiom} [Self-redundancy]
\label{sid axiom: Self-redundancy}
SID redundant information for two variables $S_i,S_j$ equals the mutual information, i.e. $\operatorname{Red}(S_i,S_j) = I(S_i;S_j)$.

\end{sid axiom}

% , i.e., for any multivariate system $\boldS=\{S_1,\dots,S_N\}$ we have
% for any set of variables $\boldB=\{S_{i_1},\dots,S_{i_\ell}\}\subseteq \boldS$,
% \begin{align}
%     \operatorname{Red}(S_{i_1},\dots,S_{i_\ell}) = \operatorname{Red}(S_{\sigma({i_1})}, \dots, S_{\sigma({i_\ell})}).
% \end{align}
% \begin{align}
%     \operatorname{Red}(S_{1},\dots,S_{N}) = \operatorname{Red}(S_{\sigma({1})}, \dots, S_{\sigma(N)}).
% \end{align}
%where $\sigma$ is a bijective permutation that maps the set $\{1,\dots,N\}$ onto itself.
%where $\sigma$ is a permutation of $\{{i_1},\dots,{i_\ell}\}$.
% \end{remark}

% Based on the above definitions and axioms, we can quantitatively analyze the problem of PID Axiom~\ref{pid axiom:mutual constrains}. In the case of Lemma~\ref{lemma: counter example}, where $S_1$, $S_2$ are two independent~$\text{Bernoulli}(1/2)$ variables, and $S_3 = S_1 \oplus S_2$, $\Psi_{123}(\bigl\{\{1\}\{23\}\bigl\})=\Psi_{123}(\bigl\{\{2\}\{13\}\bigl\})=\Psi_{123}(\bigl\{\{3\}\{12\}\bigl\})=1$, and the remaining atoms are~$0$. \blue{delete}

% The fact that these three SI-atoms have the same value is not a coincidence but rather a reflection of a special property that derives from SID Axiom~\ref{sid axiom: commutativity}.

%\red{[We might want to move this part of explanation to the appendix. ]}
Then, we revisit PID Axiom~\ref{pid axiom:mutual constrains} and aim to propose an alternative axiom. 
% \red{Starting from the observation within the framework of PID, the mutual information between any two variables and a third variable can always be decomposed into exactly four information atoms: one redundant information, two unique information, and one synergistic information, as shown in~\eqref{equ:Information Atoms' relationship_1} and~\eqref{equ:Information Atoms' relationship_2}. [The purpose of this sentence is not clear. Remove?]}
%In the SID framework, any mutual information between a subset of variables and the complementary subset can be decomposed similarly to PID’s two-source case. 
In SID, the mutual information between any two variables and the third one can be decomposed similarly to two-source PID. 
That is, for any distinct $i,j,k \in\{1,2,3\}$, $I({S_i,S_j}; S_k)$ splits into four SI-atoms (analogous to~\eqref{equ:Information Atoms' relationship_1}):
\begin{align}
\label{equ:two_one_mutul_decompose}
I(S_i,S_j&;S_k) = \;\Psi_{\boldS}(\{\{i\}\{j\}\{k\}\}) + \Psi_{\boldS}(\{\{i\}\{k\}\}) \nonumber \\
&\phantom{=}+ \Psi_{\boldS}(\{\{j\}\{k\}\}) + \Psi_{\boldS}(\{\{ij\}\{k\}\}),
\end{align}
and the two-variable mutual information $I(S_i; S_k)$ corresponds to two of those atoms (analogous to~\eqref{equ:Information Atoms' relationship_2}):
\begin{align}
\label{equ:one_one_mutul_decompose}
I(S_i;S_k) = \Psi_{\boldS}&(\{\{i\}\{j\}\{k\}\}) + \Psi_{\boldS}(\{\{i\}\{k\}\}).
\end{align}
% Based on this fundamental consensus, 
% the entropy decomposition of subsets involving one or two variables within a three-variable system still adheres to the principle that the whole equals the sum of its decomposed parts. 
Recall that we have $H(S_k) = I(S_i,S_j;S_k) + H(S_k|S_i,S_j)$ for any $k\in [3]$, and $\Psi_{\boldS}(\{\{k\}\})$ represents the information provided by~$S_k$ alone, i.e., $ \Psi_{\boldS}(\{\{k\}\})=H(S_k \mid S_i,S_j)$.
% \begin{align}\label{equation:doubleConditioningSIatom}
%     \Psi_{\boldS}(\{\{k\}\})=H(S_k \mid S_i,S_j).
% \end{align}
Therefore, we have
% \red{the entropy of any single variable in the three-variable system can be decomposed as}
\begin{align}
\label{eq:single_variable_entropy}
H(S_k) &= \, I(S_i,S_j; S_k) + H(S_k|S_i,S_j)\nonumber \\
&\overset{\eqref{equ:two_one_mutul_decompose}}{=}  \Psi_{\boldS}(\{\{i\}\{j\}\{k\}\})+ \Psi_{\boldS}(\{\{ij\}\{k\}\}) \nonumber \\
+& \Psi_{\boldS}(\{\{j\}\{k\}\}) + \Psi_{\boldS}(\{\{i\}\{k\}\})+\Psi_{\boldS}(\{\{k\}\}) \nonumber \\
&=\sum_{\beta \preceq_{\boldS} \{\{S_k\}\}}\Psi_{\boldS}(\beta).
\end{align}

Similarly, 
% considering any pair of variables within a three-variable system, the entropy decomposition continues to align with the principle that the whole equals the sum of its parts. Specifically, 
for any two variables \(\{S_i,S_k\}\subseteq\boldS\), by combining $H(S_k|S_i)=H(S_k)-I(S_i;S_k)$ 
with~\eqref{equ:one_one_mutul_decompose} and~\eqref{eq:single_variable_entropy}, we have
\begin{align*}
% \label{equation:SIDconditionalEntropy}
H(S_k|S_i) &= \Psi_{\boldS}(\{\{ij\}\{k\}\}) + \Psi_{\boldS}(\{\{j\}\{k\}\}) + \Psi_{\boldS}(\{\{k\}\}),
\end{align*}
% \begin{align}\label{equation:SIDconditionalEntropy}
% H(S_k|S_i) &= \Psi_{\boldS}(\{\{ij\}\{k\}\}) + \Psi_{\boldS}(\{\{j\}\{k\}\}) \nonumber \\
% &\phantom{=}+ \Psi_{\boldS}(\{\{k\}\}).
% \end{align}
which, combined  with the fact that $H(S_i,S_k)=H(S_i)+H(S_k|S_i)$ and with~\eqref{eq:single_variable_entropy}, shows that the joint entropy of any two variables is the sum of all atoms dominated by that pair:
\begin{align}
\label{eq:two_variables_entropy}
H&(S_i,S_k) = \Psi_{\boldS}(\{\{i\}\{j\}\{k\}\}) + \Psi_{\boldS}(\{\{i\}\{k\}\}) \nonumber \\
&+ \Psi_{\boldS}(\{\{i\}\{j\}\}) + \Psi_{\boldS}(\{\{jk\}\{i\}\})+\Psi_{\boldS}(\{\{i\}\}) \nonumber \\ 
&+\Psi_{\boldS}(\{\{ij\}\{k\}\}) + \Psi_{\boldS}(\{\{j\}\{k\}\}) + \Psi_{\boldS}(\{\{k\}\}) \nonumber \\
%&=\sum_{\alpha \preceq_{\boldS} \{\{\beta\}\},\beta\in \{S_i,S_k\}}\Psi_{\boldS}(\alpha).
&= \Sigma_{\{i,k\}},
\end{align}
where~$\Sigma_{\{i,k\}}$ is the summation of all atoms corresponding to antichains that are dominated either by~$\{\{S_i\}\}$ or by~$\{\{S_k\}\}$.

However, when extending the decomposition to the joint entropy of all three variables, the SID framework deviates from WESP due to the presence of synergy-induced redundancy. 
This discrepancy can be directly demonstrated as follows. 
By combining the fact that $H(S_i,S_j,S_k)=H(S_i,S_k)+H(S_j|S_i,S_k)$ with $\Psi_{\boldS}(\{\{j\}\})=H(S_j|S_i,S_k)$ and~\eqref{eq:two_variables_entropy},
% \red{[\eqref{eq:two_variables_entropy} is with~$S_i,S_k$ and we need~$S_i,S_j$ here. Fix either of them.]} \blue{and~\eqref{equation:doubleConditioningSIatom}?}, it follows that
\begin{align}
\label{eq:three_variables_entropy}
%H(S_i,S_j,S_k)&=\sum_{\alpha \preceq_{\boldS} \{\{\beta\}\},\beta\in \{S_i,S_k\}}\Psi_{\boldS}(\alpha) + \Psi_{\boldS}(\{\{j\}\})\nonumber \\
H(S_i,S_j,S_k)&=\Sigma_{\{i,k\}} + \Psi_{\boldS}(\{\{j\}\})\nonumber \\
%&=\sum_{\alpha \preceq_{\boldS} \{\{\beta\}\},\beta\in \{S_i,S_j,S_k\}}\Psi_{\boldS}(\alpha) - \Psi_{\boldS}(\{\{ik\}\{j\}\}).
&=\Sigma - \Psi_{\boldS}(\{\{ik\}\{j\}\}),
\end{align}
where~$\Sigma$ is the summation of all~$10$ atoms~$\Psi_{\boldS}(\alpha),\alpha\in\cA^*(\boldS)$.
Thus, unlike PID Axiom~\ref{pid axiom:mutual constrains}, we find that the total entropy is less than the sum of its decomposed parts by exactly $\Psi_{\boldS}(\{\{ij\}\{k\}\})$. 
In other words, WESP does not hold in SID due to this necessary exclusion.
% Different from PID Axiom~\ref{pid axiom:mutual constrains}, \eqref{eq:three_variables_entropy} shows that the joint entropy of the system is not equal to the sum of all the information atoms, since one of the synergistic atom $\Psi_{\boldS}(\{\{ij\}\{k\}\})$ is excluded from the total entropy calculation.

%Motivated by these results, 
Motivated by~\eqref{eq:single_variable_entropy}, \eqref{eq:two_variables_entropy}, and~\eqref{eq:three_variables_entropy}, 
we propose the following alternative to PID Axiom~\ref{pid axiom:mutual constrains} within the SID framework.

\setcounter{sid axiom}{0}
\begin{sid axiom}
\label{sid axiom:mutual constrains}
For any set of variables $\boldB\subseteq\boldS$ with $|\boldB| \le 2,|\boldS| \le 3$,
the entropy of $\boldB$ is decomposed as~$H(\boldB)=\Sigma_\boldB$, 
% \begin{align}
% \label{equ:SID Information Atoms}
% %H(\boldB) = \sum_{\alpha \preceq_{\boldA} \{\{\beta\}\},\beta\in \boldB}\Psi_{\boldS}(\alpha),
% H(\boldB)=\Sigma_\boldB,
%\end{align}
and when~$|\boldB|=|\boldS|=3$ we have for all distinct $i,j,k \in[3]$,
\begin{align}
\label{equ:SID Information Atoms}
H(\boldS) = \Sigma - \Psi_{\boldS}(\{\{ij\}\{k\}\}),
%\sum_{\alpha \preceq_{\boldS} \{\{\beta\}\},\beta\in \boldB} \Psi_{\boldS}(\alpha) - \Psi_{\boldS}(\{\{ij\}\{k\}\}),\\ 
% \text{ for all distinct } i,j,k \in[3]. 
\end{align}
where~$\Sigma_B$ and~$\Sigma$ are as defined above. 
% \red{[What role does~$\boldA$ play here? If no role, remove.]}
% where $\mathcal{O}_{\boldS}^{k}(\boldS) =\mathcal{A}^*(\boldS) \setminus \Big\{\big\{\{k\}\{ij\}\big\}\Big\}.$ 
% i.e., there are three different effective subsets corresponding to the entire system~$\{S_1,S_2,S_3\}$, denoted by $\mathcal{O}_{\boldS}^{1}(\boldS)$, $\mathcal{O}_{\boldS}^{2}(\boldS)$, and $\mathcal{O}_{\boldS}^{3}(\boldS)$.
\end{sid axiom}

% The introduction of multiple equivalent effective subsets as described in SID Axiom~\ref{sid axiom:mutiple_decomposition} precisely addresses the issue identified in Observation~\ref{observation: Inforamtion's Synergy}, explicitly acknowledging that certain SI-atoms must be excluded to prevent redundant counting of synergistic information. This fundamental shift enables SID to rigorously handle the phenomenon where the sum of parts may exceed the total entropy, overcoming a core limitation inherent in the PID framework.
% \begin{sid axiom}
% \label{sid axiom:mutual constrains}
% For any set of variables $\boldB\subseteq \boldS$ with $|\boldS| \le 2$, and any set of variables~$\boldB\subsetneq \boldS$ with~$|\boldS|=3$,
% the entropy of $\boldB$ is decomposed as:
% \begin{align}
% \label{equ:SID Information Atoms}
% H(\boldB) =     \sum_{\beta \in \mathcal{O}_\boldS(\boldB)}\Psi_{\boldS}(\beta).
% % , \forall \mathcal{O}_\boldS(\boldB)\in \Theta_{\boldS}(\boldB)
% \end{align}
% In addition, in the case~$|\boldS|=3$ we have 
% \begin{align}
% \label{equ:SID Information Atoms}
% H(\boldS) =     \sum_{\beta \in \mathcal{O}^k_\boldS(\boldS)}\Psi_{\boldS}(\beta), \mbox{ for all~$k\in\{1,2,3\}$.}
% \end{align}
% \end{sid axiom}

% \red{SID Axiom~\ref{sid axiom:mutual constrains} provides a solution to the problem of PID in which WESP is not satisfied.[We haven't showed that yet. Remove]} 
Going back to the example in Lemma~\ref{lemma: counter example} with~$T=(S_1,S_2,S_3)$,
%for the sources $\boldS=\{S_1,S_2,S_3\}$, and target $T=(S_1,S_2,S_3)$, 
there are three equivalent ways to express~$H(T)$ via Axiom~\ref{sid axiom:mutual constrains}, corresponding to which $\Psi_{\boldS}(\{\{ij\}\{k\}\})$ term is excluded, i.e., for every distinct~$i,j,k\in[3]$ we have
%treated for distinct $i,j,k$ in $[3]$ as the excluded one:
\begin{align*}
H(T)
&=\Psi_{\boldS}(\bigl\{\{i\}\{jk\}\bigl\})+\Psi_{\boldS}(\bigl\{\{j\}\{ik\}\bigl\})=2,
% , \text{ or} \nonumber \\ 
% &=\Psi_{\boldS}(\bigl\{\{1\}\{23\}\bigl\})+\Psi_{\boldS}(\bigl\{\{3\}\{12\}\bigl\}), \text{ or}  \nonumber \\
% &=\Psi_{\boldS}(\big\{\{2\}\{13\}\bigl\})+\Psi_{\boldS}(\bigl\{\{3\}\{12\}\bigl\})=2,
\end{align*} 
where the zero valued SI-atoms are omitted. 

% \red{The above definitions and axioms are the basis of the remaining axioms for a redundancy measure, which together with SID~Axiom~\ref{sid axiom:mutual constrains} calculates all SI-atoms.[Unclear, let's discuss]}

% For any system $\boldS=\{S_1,S_2,S_3\}$, we use redundant information (redundancy), $\operatorname{Red}$, to refer to the following information atom.
% \begin{align*}
%     \operatorname{Red}(S_{1},S_2,S_{3}) &\triangleq \Psi_{123}(\{1\}\{2\}\{3\})
% \end{align*}

% \begin{sid axiom} [Monotonicity and self-redundancy]
% \label{sid axiom: Monotonicity and Self-redundancy}
% Partial Information Decomposition satisfies $\operatorname{Red}(S_1,S_2,S_3) \le \min_{i,j\in[3]}\{I(S_i;S_j)\}$.
% \end{sid axiom}

%We first show via Lemma~\ref{lemma:UniqueUpToOneAtom} 

The following lemma shows that only the redundancy atom needs to be defined; the remaining atoms are then uniquely determined via linear constraints implied by Axiom~\ref{sid axiom:mutual constrains}.
%one SI-atom (usually the redundancy measure) needs to be explicitly defined.
%\red{The values of any other SI-atoms are then uniquely determined by the linear constraints specified in Axiom~\ref{sid axiom:mutual constrains}.}
A proof is provided in Appendix~\ref{app:UniqueUpToOneAtom} alongside explicit definitions for all SI atoms given~$\operatorname{Red}(S_1,S_2,S_3)$ (see~\eqref{equation:explcitSIatomsGivenRED}).

%\red{[Change this lemma to state that fixing \textit{any} SI-atom fixes the values of the remaining ones. This should follow immediately by showing that Axiom~1 induces 9 linearly independent constraints in 10 variables.]} \blue{[Finished]}
\begin{lemma}\label{lemma:UniqueUpToOneAtom}
Let \(\boldS = \{S_1, S_2, S_3\}\) be a three-variable system in the SID framework, and
\(\Psi_{123}(\cdot)\) denote its SI-atoms. 
% Suppose further that there is a redundancy measure satisfying Axioms~\ref{sid Axiom: Commutativity} and~\ref{sid axiom: Monotonicity and Self-redundancy}. 
Then, once the value of any one SI-atom is fixed, the values of all remaining SI-atoms are uniquely determined by SID Axiom~\ref{sid axiom:mutual constrains}.
\end{lemma}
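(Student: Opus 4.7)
The plan is to treat SID Axiom~\ref{sid axiom:mutual constrains} as a linear system in the ten SI-atoms $\Psi_{123}(\alpha)$ for $\alpha\in\cA^*(\boldS)$, and to exhibit an explicit one-parameter parametrization of its solution set; fixing any single atom then pins down the free parameter and hence all remaining atoms.

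First, I would enumerate the nine constraints imposed by Axiom~\ref{sid axiom:mutual constrains}: three single-variable equations $H(S_k)=\Sigma_{\{k\}}$ (cf.~\eqref{eq:single_variable_entropy}), three pair equations $H(S_i,S_k)=\Sigma_{\{i,k\}}$ (cf.~\eqref{eq:two_variables_entropy}), and three full-system equations $H(\boldS)=\Sigma-\Psi_{123}(\{\{ij\}\{k\}\})$, one per distinct triple $i,j,k\in[3]$. The last three immediately force the three synergy atoms $\Psi_{123}(\{\{ij\}\{k\}\})$ to coincide; call the common value $Y$.

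Next, treating $R:=\Psi_{123}(\{\{1\}\{2\}\{3\}\})=\operatorname{Red}(S_1,S_2,S_3)$ as a free parameter, I would derive explicit closed-form expressions for each remaining atom by combining Axiom~\ref{sid axiom:mutual constrains} with standard information identities. Expanding $I(S_i;S_j)=H(S_i)+H(S_j)-H(S_i,S_j)$ via the axiom cancels every atom except $R$ and $\Psi_{123}(\{\{i\}\{j\}\})$, yielding $\Psi_{123}(\{\{i\}\{j\}\})=I(S_i;S_j)-R$. Expanding $I(S_i,S_j;S_k)=H(S_i,S_j)+H(S_k)-H(\boldS)$ and using $H(\boldS)=\Sigma-Y$ yields $Y=I(S_i,S_j;S_k)+R-I(S_i;S_k)-I(S_j;S_k)$. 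The single-variable equation then delivers $\Psi_{123}(\{\{k\}\})=H(S_k\mid S_i,S_j)$, an expression that is in fact independent of $R$.

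The main technical step---and the one I expect will need the most care---is verifying that the three resulting formulas for $Y$, one per choice of singleton $k$, mutually agree, as Axiom~\ref{sid axiom:mutual constrains} demands. After substitution, their pairwise differences reduce to instances of the chain rule $I(S_k;S_i,S_j)=I(S_k;S_i)+I(S_k;S_j\mid S_i)$ applied symmetrically, and thus vanish for any joint distribution of $(S_1,S_2,S_3)$. With consistency secured, each non-unique atom is an affine function of $R$ with nonzero slope, so fixing any one of them inverts to recover $R$ and in turn determines the entire decomposition; the three unique atoms $\Psi_{123}(\{\{k\}\})$ are already forced by the axioms, consistent with the one-dimensional nature of the solution family. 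This establishes the claim.
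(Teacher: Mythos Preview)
Your proposal is correct and takes essentially the same approach as the paper: both treat SID Axiom~\ref{sid axiom:mutual constrains} as a linear system of nine constraints in the ten SI-atoms and exhibit the resulting one-parameter family of solutions, the paper doing so via an explicit $9\times10$ coefficient matrix whose solution~\eqref{equation:explcitSIatomsGivenRED} is parametrized by $\operatorname{Red}(S_1,S_2,S_3)$. Your step-by-step derivation of each atom and the explicit consistency check that the three candidate expressions for $Y$ agree are somewhat more detailed than the paper's matrix presentation, but the underlying argument is the same.
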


%Thus, once \(\operatorname{Red}(S_1,S_2,S_3)\) (or any single SI-atom) is chosen, every SI-atom is uniquely determined.

%\red{[Mention that this is a 3 variable generalization of common information. Go over the literature to see if such generalization has been studied in the past. ]} \blue{[Finished]}
%Lemma~\ref{lemma:UniqueUpToOneAtom} implies that
Therefore, any definition of $\operatorname{Red}(S_1,S_2,S_3)$ implies unique definitions of all SI atoms that automatically satisfy SID Axiom~\ref{sid axiom:mutual constrains}.
To satisfy Axioms~\ref{sid axiom: commutativity}, \ref{sid axiom: Monotonicity} and~\ref{sid axiom: Self-redundancy}, we adopt an operational definition of redundancy that generalizes the Gács-Körner common information\footnote{The Gács-Körner common information is defined as $\operatorname{CI}(S_1, S_2) \triangleq \max_{Q} H(Q), \text{s.t. } H(Q | S_1) = H(Q | S_2) = 0$.}~\cite{gacs1973common}.
%to quantify the largest deterministic information shared between three random variables. %Formally, for two random variables $ S_1$  and  $S_2$, their Gács–Körner common information is defined as:
% \begin{definition}[Gács-Körner Common Information \cite{gacs1973common}]
% \label{def:gk}
%     \begin{align}
% % \label{equ:gk}
% . \nonumber 
% \end{align}
% \end{definition}
%Intuitively, $\operatorname{CI}(S1,S2)$ is the entropy of the largest deterministic function that both $S1$ and $S2$ share.
%Intuitively, $\operatorname{CI}(S_1,S_2)$ is the highest entropy of any quantity that is fully determined by either~$S_1$ or~$S_2$.
% This means that~$Q$ is a random variable deterministically determined by both $S_1$ and $S_2$, and among all such $Q $, the one with the largest entropy is chosen as the common information.
% The key idea is to identify the largest entropy random variable that can be deterministically inferred from each individual source variable.
%Definition~\ref{definition:red} below extends this idea to three variables in a natural way.
%setting by considering three variables $S_1$, $S_2$, $S_3$ and defining their redundant information as:

\begin{definition}[Operational Definition of Redundancy] 
\label{definition:red}
For system \( S_1, S_2, S_3 \), the redundant information is defined as $\operatorname{Red}(S_i,S_j)\triangleq I(S_i,S_j)$ for all distinct $i , j\in \{1,2,3\}$, and
\begin{align*}%\label{equ:def}
%\label{equ:def_0}
%\operatorname{Red}(S_i,S_j)&\triangleq I(S_i,S_j)\mbox{, for all distinct }i , j\in \{1,2,3\}, \mbox{and}  \\
\operatorname{Red}(S_1,S_2,S_3)&\triangleq \max_{Q}\{H(Q)\mid\!H(Q|S_i)=0, \forall i \!\in \![3]\},
\end{align*} 
where the maximization is taken over all variables \( Q \) defined over the Cartesian product of the alphabets of \( S_1, S_2, S_3 \).
\end{definition}

% \red{[Cite the original definition of common information and explain how Definition~\ref{definition:red} generalizes it. Also discuss what~\cite{el2011network,verdu2008information} propose, what is it used for, what is known about it, and how that compares to Definition~\ref{definition:red}.]}
%Eq. \eqref{equ:def} generalizes \eqref{equ:gk} by ensuring that the chosen~$Q$ is deterministically recoverable from each of the three variables individually. Thus, it captures the largest shared deterministic information among all three variables, making it a multivariate extension of the Gács–Körner common information.

%Then, we show that this definition satisfies all the SID axioms, as described below and proved in the Appendix \ref{app: satisfaction}.
The following lemma is proved in Appendix~\ref{app: satisfaction}.
%, shows that Definition~\ref{definition:red} satisfies all SID axioms.

\begin{lemma}
\label{lemma: satisfaction}
% The redundancy measure in 
Definition~\ref{definition:red} satisfies SID~Axioms~\ref{sid axiom: commutativity}, \ref{sid axiom: Monotonicity}, and~\ref{sid axiom: Self-redundancy}.
\end{lemma}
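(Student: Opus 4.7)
The plan is to verify the three axioms independently: Self-redundancy and Commutativity will follow immediately from the form of Definition~\ref{definition:red}, while Monotonicity will require a short data-processing argument.

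First, Self-redundancy (SID Axiom~\ref{sid axiom: Self-redundancy}) is disposed of by quoting Definition~\ref{definition:red}, which directly sets $\operatorname{Red}(S_i, S_j) \triangleq I(S_i; S_j)$ for all distinct $i, j \in \{1,2,3\}$. For Commutativity (SID Axiom~\ref{sid axiom: commutativity}), I would observe that the feasible set $\{Q : H(Q \mid S_i) = 0 \text{ for all } i \in [3]\}$ appearing in Definition~\ref{definition:red} is invariant under any permutation of the source indices, and that the objective $H(Q)$ carries no index, so the maximum is unchanged under relabeling.

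The main work is Monotonicity (SID Axiom~\ref{sid axiom: Monotonicity}), which, after using Self-redundancy to rewrite the right-hand side, reduces to the claim $\operatorname{Red}(S_1, S_2, S_3) \le I(S_i; S_j)$ for every distinct $i, j \in [3]$. Given any feasible $Q$ for the three-variable optimization, the constraints $H(Q \mid S_i) = H(Q \mid S_j) = 0$ make $Q$ a deterministic function of both $S_i$ and $S_j$, so the Markov chain $S_j \to S_i \to Q$ holds and data processing yields $H(Q) = I(Q; S_j) \le I(S_i; S_j)$. Taking the supremum over feasible $Q$ closes the argument, since this bound is uniform in $Q$ and the choice of pair $(i,j)$ was arbitrary.

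I do not anticipate any serious obstacle; the only non-routine step is the monotonicity bound, which is essentially the classical proof that Gács-Körner common information is upper bounded by mutual information, applied here to a $Q$ drawn from the strictly more constrained three-variable feasible set. The one care point is noting that adding the third constraint $H(Q \mid S_k) = 0$ only shrinks the feasible set, so any $Q$ admissible for the three-variable problem is automatically admissible for each two-variable Gács-Körner problem embedded within it.
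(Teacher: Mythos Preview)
Your proposal is correct and essentially matches the paper's proof. The only cosmetic difference is that the paper routes the monotonicity bound through the named intermediate quantity $\operatorname{CI}(S_i,S_j)$ (observing that the three-variable feasible set is contained in the two-variable one, then citing $\operatorname{CI}(S_i,S_j)\le I(S_i;S_j)$), whereas you prove the same inequality directly via the data-processing step; as you yourself note, this is exactly the classical argument for that cited bound.
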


% In this section, we introduced a new three-variable information decomposition framework, SID (System Information Decomposition), 
% which addresses the ``\textit{whole less than the sum of parts}'' issue by effectively selecting the proper subsets (antichains) to include in the decomposition. 
% Definition~\ref{definition:red} then provides an operational measure for redundancy that meets these requirements, and also provides an explicit definition for all SI-atoms via Lemma~\ref{lemma:UniqueUpToOneAtom}; the resulting atoms satisfy all SI axioms by Lemma~\ref{lemma: satisfaction}. 

%For systems with four or more variables, new complications arise. 
In the next section we detail that new complications arise in systems with four or more variables.
%these complications, 
Moreover, we show that they \textit{cannot} be resolved by summing over lattice subsets as done in this section, leaving a challenging open problem.
%In the next section, we present a more challenging open problem that remains unsolved.

\section{Limitations of Anti-chains based Information Decomposition}
\label{sec:limitation of ac}
%After introducing the PID framework in Section~\ref{sec:PID}, we proved that for a three-variable system with the target equal to the joint source, an intrinsic inconsistency arises (the whole is less than the sum of the parts). Section~\ref{sec:SID} addressed this issue by proposing a three-variable decomposition (SID) using the new summation rule (SID Axiom~\ref{sid axiom:mutual constrains}) for the joint mutual information $I(\boldS;T)$. 
In this section, it is shown that the partial summation approach taken in Section~\ref{sec:SID} is insufficient for three-or-more source variable systems, with a target that is not necessarily equal to the sources. 
\begin{theorem}[Subsystem inconsistency in lattice-based PID]
\label{theorem:no sub set}
For any set of functions $\Pi^{T}_{\boldA}:\mathcal{A}(\boldA) \rightarrow \mathbb{R}, \forall \boldA \subseteq \boldS$, which satisfy PID Axioms~\ref{sid axiom: commutativity}, \ref{pid axiom: Monotonicity}, \ref{pid axiom: Self-redundancy}, and Property~\ref{property: Independent Identity}, there is no way to redefine PID Axiom~\ref{pid axiom:mutual constrains} so that~\eqref{equ:subsystem} (Lemma~\ref{lemma: subsystem consistency}) is satisfied. 
% Specifically, in any way to define axioms 1, there exist two separate systems which agree on all information atoms, but have a different 
% \blue{which satisfies the PID Axiom~1' (with partial), Lemma~1 (subsystem consistency), 2, \ref{pid axiom: Monotonicity}, \ref{pid axiom: Self-redundancy},} 
Specifically, there is no fixed subset
\(
\mathcal{O} \subset \mathcal{A}(\boldS)
\)
such that
\begin{align}
\label{equ:subseto}
I(\boldS;T) = \sum_{\beta \in \mathcal{O}} \Pi^{T}_{\boldS}(\beta)
\end{align} 
for all systems with~$|\boldS|=3$ source variables and a target~$T$.
%for every possible three-source system $\boldS=\{S_1,S_2,S_3\}$ with some target $T$.
\end{theorem}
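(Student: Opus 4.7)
The plan is a proof by contradiction. Fix any measure $\Pi$ satisfying the three PID axioms and Property~\ref{property: Independent Identity}, and suppose a fixed subset $\mathcal{O} \subseteq \mathcal{A}(\boldS)$ exists with $I(\boldS;T) = \sum_{\beta \in \mathcal{O}} \Pi^{T}_{\boldS}(\beta)$ for every three-source system $(\boldS, T)$. Because the right-hand side depends only on the vector of PI-atom values $\bigl(\Pi^{T}_{\boldS}(\beta)\bigr)_{\beta \in \cA(\boldS)}$, it suffices to exhibit two three-source systems $(\boldS^A, T^A)$ and $(\boldS^B, T^B)$ whose PI-atom vectors coincide entry-wise yet whose mutual informations satisfy $I(\boldS^A; T^A) \neq I(\boldS^B; T^B)$. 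The hypothesized equality would then require the same $\mathcal{O}$-sum to equal two distinct values, a contradiction.

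I would construct the pair by taking System~$A$ as the Rauh counterexample of Lemma~\ref{lemma: counter example} and System~$B$ as a companion sharing the same source joint $P(S_1,S_2,S_3)$ and the same lower-order marginals $P(T, S_i)$ and $P(T, S_i, S_j)$, but with a different three-way conditional $P(T \mid S_1,S_2,S_3)$ chosen so that $I(\boldS;T)$ shifts. Commutativity, monotonicity, and self-redundancy only invoke single-source mutual information, and independent identity only invokes pairwise independence together with $T \overset{\mathrm{det}}{=}(S_i, S_j)$. Every such quantity is determined by the preserved marginals, so the axiomatic constraints on $\Pi$ in both systems are identical. Hence all redundancy atoms at every level of the lattice match automatically, and the remaining atoms are argued to coincide because the perturbation from $P^A$ to $P^B$ is chosen along a direction of the full joint to which these atoms are axiomatically blind.

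The main obstacle is exactly this last step---controlling the axiomatically free atoms such as the pair-synergy antichains $\{\{ij\}\}$ and the three-way redundancy $\{\{1\}\{2\}\{3\}\}$, whose values can in principle depend on the full joint distribution. If the direct twin-system construction proves elusive, an alternative route is a multi-system argument: use the three systems $T = S_i \oplus S_j$ (one per unordered pair), each of which forces any admissible $\mathcal{O}$ to contain the corresponding pair-synergy antichain $\{\{ij\}\}$, and then apply the ``XOR-chain'' system $T = (S_1 \oplus S_2,\, S_2 \oplus S_3)$, in which the three pair-synergy atoms together account for $3$ bits whereas $I(\boldS;T) = 2$, immediately breaking the hypothesized universal identity. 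Either route culminates in the same conclusion: no fixed antichain subset of $\mathcal{A}(\boldS)$ can uniformly express $I(\boldS;T)$ across three-source systems.
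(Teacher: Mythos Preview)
Your twin-system strategy is the same as the paper's, but your proposed construction of the companion system cannot work. In the Rauh system one has $\bar T\overset{\mathrm{det}}{=}(\bar S_i,\bar S_j)$ for every pair, so preserving all two-source marginals $P(T,S_i,S_j)$ forces $H(T\mid S_i,S_j)=0$ in System~$B$ as well; then $I(\boldS;T)=H(T)$ is already pinned down by the marginal of $T$ and cannot be shifted by changing $P(T\mid S_1,S_2,S_3)$. More fundamentally, the sentence ``the remaining atoms are argued to coincide because the perturbation \ldots is along a direction to which these atoms are axiomatically blind'' is circular: the theorem quantifies over \emph{every} $\Pi$ obeying the axioms, so any atom not nailed down by the axioms may depend on the full joint in an arbitrary way. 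You must exhibit systems in which Axioms~\ref{PID Axiom: Commutativity}--\ref{pid axiom: Self-redundancy}, Property~\ref{property: Independent Identity} and the consistency relation~\eqref{equ:subsystem} \emph{force} every single atom to a definite numerical value.

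The paper does exactly this. System~$B$ is indeed the Rauh example, but System~$A$ is built from three \emph{independent} XOR triples, with $\hat S_i$ collecting the $i$-th bit of each triple and the target $\hat T=(x_1,x_5,x_9)$ taking one bit per triple; this yields $I(\hat{\boldS};\hat T)=3$ while $I(\bar{\boldS};\bar T)=2$. One then verifies, using only the listed axioms together with~\eqref{equ:subsystem}, that in \emph{both} systems every atom vanishes except $\Pi^{T}_{123}(\{\{i\}\{jk\}\})=1$ for each of the three choices of $i$, which is precisely the contradiction needed. Your alternative multi-system route (force $\{\{ij\}\}\in\mathcal O$ via $T=S_i\oplus S_j$, then overcount in an XOR chain) is a genuinely different idea and may be salvageable, but as written it is incomplete: concluding $\{\{ij\}\}\in\mathcal O$ requires $\Pi^{T}_{123}(\{\{ij\}\})$ to be the \emph{only} nonzero atom in that auxiliary system, and you have not shown from the stated hypotheses that compound-source antichains such as $\{\{12\}\{13\}\}$ or $\{\{12\}\{13\}\{23\}\}$ carry zero weight.
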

% \begin{theorem}[Subsystem inconsistency in lattice-based PID]
% \label{theorem:no sub set}
% For any set of functions $\Pi^{T}_{\boldA}:\mathcal{A}(\boldA) \rightarrow \mathbb{R}, \forall \boldA \subseteq \boldS$, \blue{such that satisfies the PID Axiom~\ref{PID Axiom: Commutativity}, \ref{pid axiom: Monotonicity}, \ref{pid axiom: Self-redundancy}, and~\ref{pid axiom:mutual constrains} for every $\boldA \subsetneq \boldS$,} there is no fixed subset
% \(
% \mathcal{O} \subset \mathcal{A}(\boldS)
% \)
% such that
% \begin{align}
% \label{equ:subseto}
% I(\boldS;T) = \sum_{\beta \in \mathcal{O}} \Pi^{T}_{\boldS}(\beta)
% \end{align} 
% for all systems with~$|\boldS|=3$ source variables and a target~$T$ \emph{simultaneously}.
% %for every possible three-source system $\boldS=\{S_1,S_2,S_3\}$ with some target $T$.
% \end{theorem}
We emphasize that Theorem~\ref{theorem:no sub set} 
shows that for any PI-function $\Pi_A$, there is no set $\mathcal{O}$ satisfying~\eqref{equ:subseto} for all systems of three-or-more variables~$\boldS$ and a target~$T$.
In cases where $T=(S_1,S_2,S_3)$ (or equivalently, where~$T\overset{\det}{=}(S_1,S_2,S_3)$) such solution exists as shown in Section~\ref{sec:SID}.
%the solution subset has been given in~\eqref{equ:SID Information Atoms} of SID framework. But for any general target $T$, the theorem calim that one cannot choose a universal collection of PID atoms that always sum up to the full mutual information across all systems. 
To prove this theorem, we construct the following two systems (Fig.~\ref{fig:System12}) and the subsequent Lemma~\ref{lemma:NoUniversalSubset} is proved in Appendix~\ref{app:NoUniversalSubset}.

\paragraph{System~1 ($\hat{S}_1,\hat{S}_2,\hat{S}_3,\hat{T}$)}
Define six independent bits $x_1,x_2,x_4,x_5,x_7,x_8 \sim \operatorname{Bernoulli}(1/2)$, and three additional
\begin{align*}
    x_3=x_1\oplus x_2,~x_6=x_4\oplus x_5,\mbox{ and } x_9=x_7\oplus x_8.
\end{align*}
Define the system as
$
    \hat{S}_1=(x_1,x_4,x_7),
    \hat{S}_2=(x_2,x_5,x_8),
    \hat{S}_3=(x_3,x_6,x_9), \mbox{ and }
    \hat{T}=(x_1,x_5,x_9).
$
%, and set the target as $ (Figure~\ref{fig:System1}). 
% Clearly, ${I(\hat{T}; \hat{S}_1,\hat{S}_2,\hat{S}_3) = H(\hat{T}) = H(x_1,x_5,x_9) = 3}$.
% Furthermore, %exactly three PI-atoms have value 1:
% $\Pi_{123}^{\hat{T}}(\{\{1\}\{23\}\})=\Pi_{123}^{\hat{T}}(\{\{2\}\{13\}\})=\Pi_{123}^{\hat{T}}(\{\{3\}\{12\}\})=1$,
% and the remaining PI-atoms are zero.

\paragraph{System 2 (\(\bar{S}_1,\bar{S}_2,\bar{S}_3,\bar{T}\))}
Let \(x_1\) and \(x_2\) be two independent \(\operatorname{Bernoulli}(1/2)\) variables, and
$x_3 = x_1 \oplus x_2$. Define $\bar{S}_1 = x_1, \bar{S}_2 = x_2, \bar{S}_3 = x_3$, and set the target as $\bar{T} = (x_1, x_2, x_3)$ (this is the system from Lemma~\ref{lemma: counter example}).
\begin{figure}[htbp]
    \centering
    \fbox{\includegraphics[width=.9\linewidth]{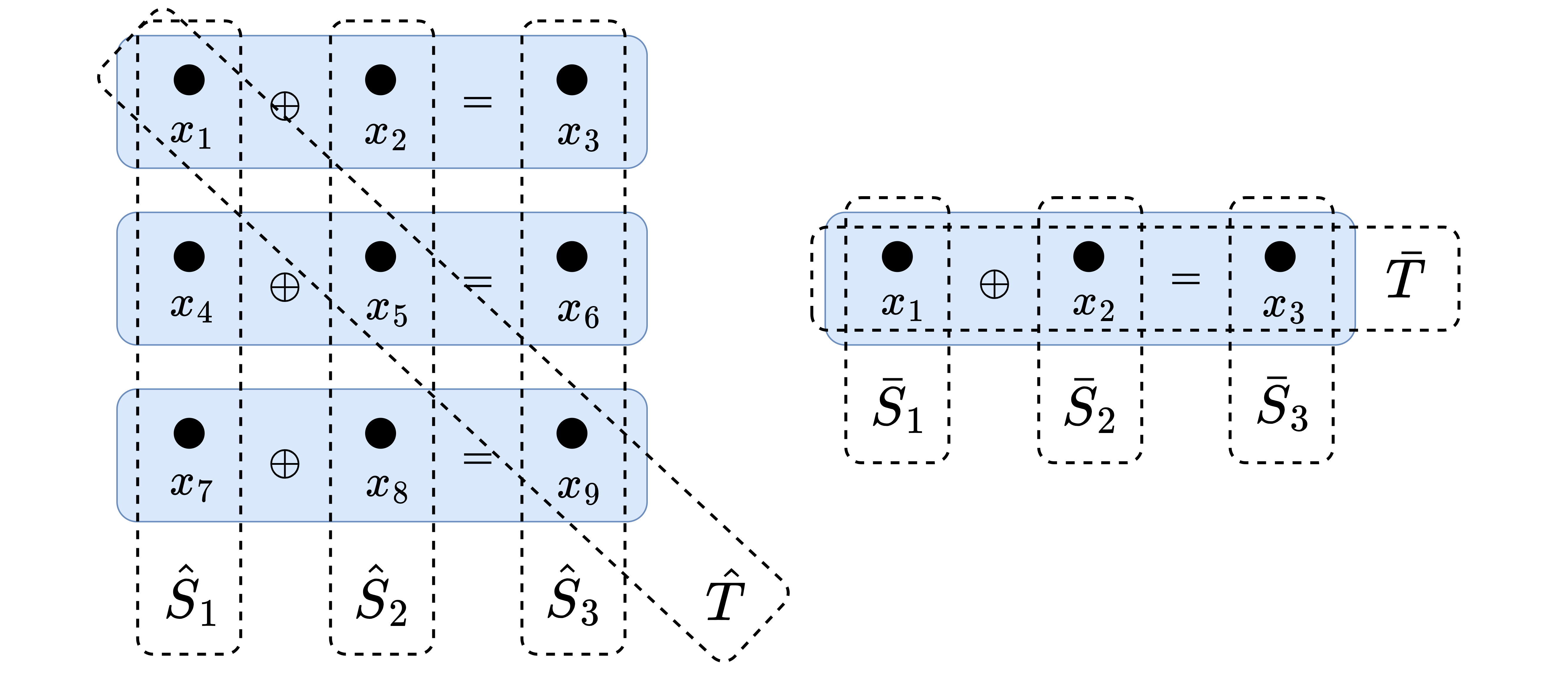}}
    \caption{Construction of $(\hat{S}_1,\hat{S}_2,\hat{S}_3,\hat{T})$ and ($\bar{S}_1,\bar{S}_2,\bar{S}_3,\bar{T}$).}
    \label{fig:System12}
\end{figure}
% Clearly, $I(\bar{T}; \bar{S}_1,\bar{S}_2,\bar{S}_3) =2$.
% Furthermore, $\Pi_{123}^{\hat{T}}(\{\{1\}\{23\}\})=\Pi_{123}^{\hat{T}}(\{\{2\}\{13\}\})=\Pi_{123}^{\hat{T}}(\{\{3\}\{12\}\})=1$, and the remaining atoms are zero.
% where:
% \begin{itemize}
%     \item  $I(\bar{T}; \bar{S}_1,\bar{S}_2,\bar{S}_3) =2$ bits.
%     \item The same three PI-atoms $\Pi_{123}^{\hat{T}}(\{\{1\}\{23\}\})$, $\Pi_{123}^{\hat{T}}(\{\{2\}\{13\}\})$, and $\Pi_{123}^{\hat{T}}(\{\{3\}\{12\}\})$ equal 1 bit, with all others zero.
% % \end{itemize}
% \begin{figure}[htbp]
%     \centering
% \fbox{\includegraphics[width=.6\linewidth]{image/System2.png }}
%     \caption{Construction of System 2, ($\bar{S}_1,\bar{S}_2,\bar{S}_3,\bar{T}$).}
%     \label{fig:System2}
% \end{figure}
%\paragraph{Summary}
\begin{lemma}\label{lemma:NoUniversalSubset}
For System 1 $(\hat{\boldS}=\{\hat{S}_1,\hat{S}_2,\hat{S}_3\},\hat{T})$ and System 2 $(\bar{\boldS}=\{\bar{S}_1,\bar{S}_2,\bar{S}_3\},\bar{T})$, (i) there is a bijection $\psi:\mathcal{A}(\hat{\boldS}) \to \mathcal{A}(\bar{\boldS})$ with $\Pi^{\hat{T}}_{\hat{\boldA}}(\hat{\beta})=\Pi^{\bar{T}}_{\psi(\hat{\boldA})}(\psi(\hat{\beta}))$ for all atoms $\hat{\beta} \in \mathcal{A}(\hat{\boldS})$, $\hat{\boldA}\subseteq\hat{\boldS}$, and for any family of $\Pi$ functions satisfying PID Axioms~\ref{sid axiom: commutativity}, \ref{pid axiom: Monotonicity}, \ref{pid axiom: Self-redundancy}, Property~\ref{property: Independent Identity} and~\eqref{equ:subsystem}; and (ii) $I(\hat{\boldS};\hat{T})\ne I(\bar{\boldS};\bar{T})$.
% There exist two different source-target systems $(\hat{\boldS}=\{\hat{S}_1,\hat{S}_2,\hat{S}_3\},\hat{T})$ and 
% $(\bar{\boldS}=\{\bar{S}_1,\bar{S}_2,\bar{S}_3\},\bar{T})$ with the bijection function $\psi:\mathcal{A}(\hat{\boldS}) \to \mathcal{A}(\bar{\boldS})$ such that $\Pi^{\hat{T}}_{\hat{\boldA}}(\hat{\beta})=\Pi^{\bar{T}}_{\psi(\hat{\boldA})}(\psi(\hat{\beta}))$  for all $\hat{\boldA}\subseteq\hat{\boldS}$ and $\hat{\beta} \in \mathcal{A}(\hat{\boldS})$, yet exhibit different mutual information between their sources and target, e.g. $I(\hat{\boldS};\hat{T})\ne I(\bar{\boldS};\bar{T})$. 
\end{lemma}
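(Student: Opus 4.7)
Part~(ii) is handled by direct entropy computation. In System~1, the target $\hat{T}=(x_1,x_5,x_9)$ consists of three mutually independent uniform bits, one drawn from each of the three independent XOR triples, so $H(\hat{T})=3$. Since $\hat{T}$ is a deterministic function of $(\hat{S}_1,\hat{S}_2,\hat{S}_3)$ (note that $x_9=x_7\oplus x_8$ with $x_7\in\hat{S}_1,x_8\in\hat{S}_2$), we get $I(\hat{\boldS};\hat{T})=H(\hat{T})=3$. In System~2, the relation $x_3=x_1\oplus x_2$ collapses $\bar{T}$ to two bits of entropy, so $I(\bar{\boldS};\bar{T})=H(\bar{T})=2$. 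Thus $I(\hat{\boldS};\hat{T})\ne I(\bar{\boldS};\bar{T})$.

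For part~(i), I would take $\psi$ to be the bijection induced by the identity on the source-index set $\{1,2,3\}$, so each antichain $\beta\in\cA(\hat{\boldS})$ is identified with the same index-labelled antichain in $\cA(\bar{\boldS})$ (and similarly for subsets $\hat{\boldA}\mapsto\bar{\boldA}$). The claim reduces to $\Pi^{\hat{T}}_{\hat{\boldA}}(\beta)=\Pi^{\bar{T}}_{\bar{\boldA}}(\beta)$ for every pair $(\boldA,\beta)$. Working from the bottom of the lattice, PID Axiom~\ref{pid axiom: Self-redundancy} together with the matched single-source values $I(\hat{S}_i;\hat{T})=1=I(\bar{S}_i;\bar{T})$ pins the singleton atoms $\Pi^T_{\{S_i\}}(\{\{i\}\})$ to the common value~$1$. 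For each pair subsystem, subsystem consistency~\eqref{equ:subsystem} with $\boldC=\{S_i\}$ gives $\Pi_{ij}(\{\{i\}\})+\Pi_{ij}(\{\{i\}\{j\}\})=1$ in both systems. In System~2, Property~\ref{property: Independent Identity} activates because $\bar{T}\overset{\text{det}}{=}(\bar{S}_i,\bar{S}_j)$ and $I(\bar{S}_i;\bar{S}_j)=0$, forcing $\Pi_{ij}(\{\{i\}\{j\}\})=0$. In System~1, Property's determinism hypothesis fails (since $H(\hat{S}_i,\hat{S}_j\mid\hat{T})>0$), so the pair-level redundancy is not pinned down by Property, but the same assignment (redundancy zero, unique atoms equal to~$1$) can be \emph{chosen} consistently with the remaining constraints.

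The main obstacle is propagating this matching through the higher-level atoms such as $\Pi(\{\{i,j\}\})$ and $\Pi(\{\{k\}\{ij\}\})$, where the joint mutual informations differ: $I((\hat{S}_i,\hat{S}_j);\hat{T})=3$ but $I((\bar{S}_i,\bar{S}_j);\bar{T})=2$, and likewise $I(\hat{\boldS};\hat{T})=3\ne 2=I(\bar{\boldS};\bar{T})$. The resolution is that the lemma's hypothesis deliberately omits PID Axiom~\ref{pid axiom:mutual constrains}; the only cross-lattice equation assumed is~\eqref{equ:subsystem}, which relates atom sums across different ambient subsystems but never equates a sum directly to a pair or triple mutual information. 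Consequently, the differing joint mutual informations have no axiomatic channel through which they can force any individual atom of $\Pi^{\hat{T}}$ to differ from the corresponding atom of $\Pi^{\bar{T}}$. Concretely, starting from any valid $\Pi^{\bar{T}}$ for System~2 and transcribing all atom values into System~1 via $\psi$, the two sides of every instance of~\eqref{equ:subsystem} become the same algebraic expression, so the transcribed family is automatically feasible; commutativity, monotonicity, self-redundancy, and (the vacuous case of) Property are likewise preserved since they depend only on source-level quantities that already agree between the two systems. Combining the atom matching with part~(ii) then delivers the contradiction driving Theorem~\ref{theorem:no sub set}: any purported fixed $\mathcal{O}$ would yield equal atom-sums for the two systems but unequal mutual informations, contradicting~\eqref{equ:subseto}.
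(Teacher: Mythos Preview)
Your computation for part~(ii) is fine, and your choice of the index-identity bijection~$\psi$ is the same one the paper uses.

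The gap is in part~(i), and it is a quantifier error. The lemma asserts that \emph{for any} family of $\Pi$ functions satisfying the listed axioms, the atom values in System~1 and System~2 coincide. What you actually argue is that a matching assignment \emph{can be chosen}: you take any valid $\Pi^{\bar T}$ for System~2, transcribe it to System~1, and verify the transcription satisfies the axioms there. That establishes existence, not universality. Your own observation that in System~1 ``the pair-level redundancy is not pinned down by Property'' cuts against you: if the axioms leave $\Pi^{\hat T}_{ij}(\{\{i\}\{j\}\})$ free in $[0,1]$, then a rule could legitimately return, say, $\tfrac12$ on System~1 while Property~\ref{property: Independent Identity} forces~$0$ on System~2, and the bijection would fail for that rule. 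Your meta-claim that ``the differing joint mutual informations have no axiomatic channel'' is not enough, because a PID rule is a functional of the full joint distribution, not merely of the single-source quantities that happen to match; the axioms constrain it but do not force it to ignore the features that distinguish the two systems.

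The paper takes a different route precisely to obtain the universal statement: it does not rely on Property~\ref{property: Independent Identity} for System~1 at all. Instead it decomposes the target as $\hat T=(\hat T_1,\hat T_2,\hat T_3)$ with $\hat T_k$ a single bit, and exploits the vanishing mutual informations $I(\hat S_j;\hat T_k)=0$ for $j\ne k$. Self-redundancy then gives $\Pi^{\hat T_k}_{j}(\{\{j\}\})=0$, and monotonicity plus nonnegativity (Lemma~\ref{Lemma: Nonnegativity}) squeeze all pairwise and triple redundancies with sub-target $\hat T_k$ to zero; the remaining atom $\Pi^{\hat T_k}_{123}(\{\{k\}\{ij\}\})=1$ is then forced by~\eqref{equ:subsystem}. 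The three sub-target computations are combined (using that the $\hat T_k$ are independent) to obtain the System~1 atom values, which then match System~2's axiom-forced values. The key idea you are missing is this sub-target forcing step; without it the atoms in System~1 are genuinely underdetermined by the hypotheses you invoke.
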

% \begin{proof}
% A way to establish a “ground truth” in information decomposition is to generate synthetic Boolean datasets with predefined independence or synergy among variables \cite{james2017multivariate,ince2017measuring,lyu2023system}. By configuring certain subsets to be mutually independent and others to exhibit synergistic logical functions, we can precisely control the underlying information structure, ensuring transparent conditions under which unique, redundant, and synergistic contributions can be verified. 
% To prove Lemma~\ref{lemma:NoUniversalSubset}, we explicitly construct two distinct three-source systems as described below. The detailed derivation and verification of the PI-atoms for both systems are deferred to Appendix~\ref{app:NoUniversalSubset}.
% \end{proof}

% \red{Systems 1 and 2 provide the required counterexample to prove Lemma~\ref{lemma: counter example}. In this scenario, both systems have identical PI-atom structure and atom values, yet differ in total mutual information. For system 1, we conclude that the summation for joint mutual information $I(\hat{\boldS};\hat{T})$ must contain the three PI-atoms, $\Pi_{123}^{\hat{T}}(\{\{1\}\{23\}\})$, $\Pi_{123}^{\hat{T}}(\{\{2\}\{13\}\})$ and $\Pi_{123}^{\hat{T}}(\{\{3\}\{12\}\})$, to achieve that $I((\hat{S}_1,\hat{S}_2,\hat{S}_3);\hat{T})= 3$. However, in System 2, the summation for joint mutual information $I(\hat{\boldS};\hat{T})$ in PID Axiom~\ref{pid axiom:mutual constrains} cannot include all three PI-atoms simultaneously.}

Given Lemma~\ref{lemma:NoUniversalSubset}, Theorem~\ref{theorem:no sub set} is immediate: since all atoms are identical, it follows $\sum_{\beta \in \mathcal{O}} \Pi^{\hat{T}}_{\hat{\boldS}}(\hat{\beta})=\sum_{\beta \in \mathcal{O}} \Pi^{\bar{T}}_{\psi(\hat{\boldS})}(\psi(\hat{\beta}))$ for every~$\mathcal{O}$ in the respective lattices, yet~$I(\hat{\boldS};\hat{T})\ne I(\bar{\boldS};\bar{T})$.

% no fixed subset of antichains $\mathcal{O}$ can produce $I(\boldS;T)$ for both System 1 and System 2 (or, by extension, for all systems).
% Therefore, there is no single subset \(\mathcal{O}\) in \eqref{equ:subseto} that works universally for all three-source variable systems.

% Thus, an antichain-based additive decomposition (as in PID) cannot be consistently applied to all 3-source systems – in general, different systems demand different atoms to sum up their information, undermining any single predefined lattice summation rule.

This result shows that PID based purely on a fixed collection of antichain-defined atoms will face insurmountable problems when extended beyond three variables. 
% In other words, new structural ideas beyond the lattice (antichains) are needed for four or more variables. 
%And this part of the analysis and discussion will be expanded in the next section.
This part of the analysis will be discussed in the next section.

\section{Discussion}
\label{sec:discussion}
This work identifies a fundamental inconsistency of the PID framework.
Section~\ref{sec:PID} shows that PID violates the WESP principle in multivariate systems.
To address this, we introduced the SID framework (Section~\ref{sec:SID}), which resolves WESP inconsistencies in three-variable systems by redefining the summation rules for information atoms within a reduced antichain structure. 
However, Section~\ref{sec:limitation of ac} demonstrates via another counterexample that antichain-based PID methods inherently fail to capture higher-order interactions, thus producing unresolvable inconsistency.

% This work began by identifying a critical limitation of the PID : as shown in Section~\ref{sec:PID}, PID fails to satisfy the set-theoretic principle that the whole equals the sum of its parts (WESP) in multivariate systems. In response, Section~\ref{sec:SID} introduced the SID framework, which resolves this inconsistency for three-variable systems by redefining the summation rules for information atoms over a reduced antichain structure. However, Section~\ref{sec:limitation of ac} demonstrated that for systems of four or more variables, any antichain-based PID approach encounters insurmountable obstacles and cannot fully capture the complexity of higher-order information interactions. These findings lead to several key points for discussion.

% This work began by identifying a critical limitation of the PID framework: as shown in Section~\ref{sec:PID}, PID fails to satisfy the principle of WESP from set theory in multivariate systems. In response to this, Section~\ref{sec:SID} introduced the SID framework, which resolves this issue for three-variable systems by redefining the summation rule for target free information atoms over a reduced antichain structure. However, as Section~\ref{sec:limitation of ac} demonstrated, for four or more variable systems, antichain-based PID faces unsolvable obstacles and cannot fully capture the complexity of multivariate information interactions. 
% These findings raise several important points that merit further discussion.

\paragraph*{Conceptual Implications of SID}
%SID is a target-free framework emphasizing strong symmetry in redundant information, which stems from the commutativity of source variables when the target equals the joint set of sources. 
SID is a target-free commutative framework in which the target is equal to the sources.
While similar to Partial Entropy Decomposition (PED)~\cite{ince2017partial,varley2023partial}, SID uses a reduced antichain lattice (Definition~\ref{definition:SID_half_lattice}) that excludes antichains lacking singleton elements. 
The removed antichains correspond to the irreducible synergistic information atoms that cannot be provided by any single source alone; these are unnecessary when the target is a subset of the sources. 
This insight is inspired by the decomposition of the joint entropy of a two-variable system using Shannon's laws, i.e. $H(X,Y)=H(X|Y)+H(Y)$, where each component of the decomposition comes directly from at least one of the source variables. 
This insight can be easily extended through the chain rule~\cite[Ch. 2]{yeung2012first} to multivariable systems.  
SID Axiom~\ref{sid axiom:mutual constrains} is motivated by these observations, and furthermore, it reveals that synergistic information is inherently symmetric and higher-order, leading to multivariate information interactions that fundamentally differ from classical set theory.

\paragraph*{Beyond Three Variables – Limitations of Antichains}
% Despite SID’s effectiveness for three-variable cases, extending beyond three variables presents significant challenges.
Section~\ref{sec:limitation of ac} illustrates scenarios where two distinct systems share identical PID decompositions yet differ in total joint information. 
This discrepancy arises because synergy is fundamentally a \textit{holistic} (or \textit{collective}) higher-order symmetric relationship among variables: an $n$-order synergy (among $n$ variables) contributes its information collectively to the system as a whole, appearing~$n-1$ times in the summation rule for joint entropy (SID Axiom~\ref{sid axiom:mutual constrains}). 
%times of each of their information.
% This occurs because synergy is a higher-order symmetric and structure-based [] relationship: in a multi-variable system, any $n$-order synergy only contributes $n-1$ times of their information to the whole system. 
PID frameworks fail to capture this holistic nature of synergy, as they focus solely on  mutual information (i.e., $I(\boldS;T)$), and isolate information to single antichain atoms, while ignoring the collective synergistic structure.
% while the antichain lattice approach isolates information into individual atoms without considering the collective structure of synergistic interactions.
% PID cannot describe that since it only focuses on local mutual information ($I(\boldS;T)$), and the antichain lattice only focus on atoms but not the synergistic structures.
Thus, we suggest that future multivariate decomposition frameworks must:
(1) Decompose total system entropy rather than partial/mutual information.
(2) Capture higher-order synergistic structures across variable subsets.
(3) Move beyond traditional set-theoretic assumptions (e.g., WESP) to accommodate information’s unique properties.

\newpage
\IEEEtriggeratref{11}
%Bibliography
\bibliographystyle{unsrt}
\bibliography{references}

%% Or you use manual references (pay attention to consistency and the
%% formatting style!):
% \begin{thebibliography}{9}

%\bibitem{Laport:LaTeX}
%L.~Lamport,
%  \emph{\LaTeX: A Document Preparation System,} 
%  Addison-Wesley, Reading, Massachusetts, USA, 2nd~ed., 1994. 
% \bibliography{references}

% \end{thebibliography}

% \ifthenelse{\boolean{IncludeAppendix}}{
\newpage
\appendix
\subsection{Comparison between SID and two sources PID}
\label{app:compairson}

\begin{figure}[htbp]
\centering
\fbox{\includegraphics[width=0.97\linewidth]{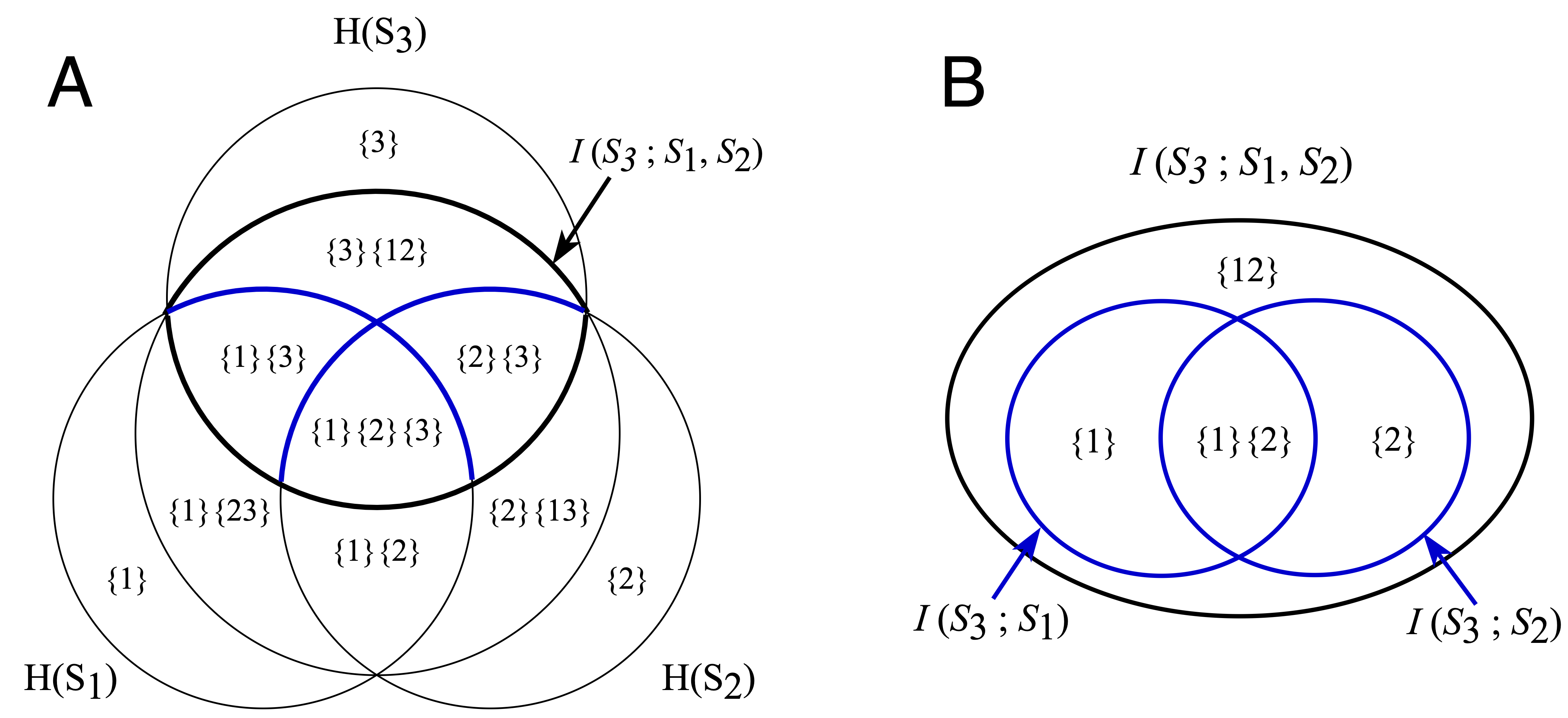 }}
\caption{Comparison between SID and two sources PID.}
\label{fig:SID_2PID}
\end{figure}

SID extends the scope of 2-source PID from mutual information $I(\boldS/S_i;S_i)$ to the joint entropy $H(\boldS)$ of the system. 
In SID (target-free), each SI-atom represents information that a certain combination of variables provides redundantly to the system as a whole.
% Unlike PID, SID is a target-free framework, where each atom represents the information redundantly provided by the sets with the antichain. In 
For instance, in Fig.~\ref{fig:SID_2PID}(A), the SI-atom $\Psi_{123}(\{\{3\}\{12\}\})$ represents information in $S_3$ that is also contributed synergistically by $S_1$ and $S_2$. This directly corresponds to the PI-atom $\Pi_{12}^3(\{\{12\}\})$ in the PID view (Fig.~\ref{fig:SID_2PID}(B)), where we have a target $T=S_3$ and sources $S_1,S_2$.

\subsection{Proofs of Main Results}
To prove the lemmas in the paper, we first need the following corollary from Lemma~\ref{lemma: subsystem consistency}.
%\red{[Subsystem consistency as Lemma~1, with the following two equations as corollaries.]}
\begin{corollary}
\label{corollary: two result}
For the system $(S_1, S_2,S_3,T)$ and its sub-system $(S_1,\!S_2,\!T)$, $(S_1,\!T)$, the decomposed PI-atoms from different sub-systems have the following relationship:
\begin{align}
\label{equ:cross scale_0}
\Pi^{T}_{1}(\!\bigl\{\!\{1\}\!\bigl\}\!) = \Pi^{T}_{12}(\!\bigl\{\!\{1\}\{2\}\!\bigl\}\!) +\Pi^{T}_{12}(\!\bigl\{\!\{1\}\!\bigl\}\!),
\end{align}  
similarly, for the system $(S_1,S_2,S_3,T)$ and $(S_1,S_2,T)$,
\begin{align}
\label{equ:cross scale}
\Pi^{T}_{12}(\!\bigl\{\!\{1\}\{2\}\!\bigl\}\!) &= \Pi^{T}_{123}(\!\bigl\{\!\{1\}\!\{2\}\!\{3\}\!\bigl\}\!) +\Pi^{T}_{123}(\!\bigl\{\!\{1\}\!\{2\}\!\bigl\}\!),\\
\Pi^{T}_{12}(\!\bigl\{\!\{1\}\!\bigl\}\!) &= \Pi^{T}_{123}(\!\bigl\{\!\{1\}\{3\}\!\bigl\}\!)+\Pi^{T}_{123}(\!\bigl\{\!\{1\}\{23\}\!\bigl\}\!)\nonumber \\&+\Pi^{T}_{123}(\!\bigl\{\!\{1\}\!\bigl\}\!).\label{equ:cross scale_2}
\end{align}
\end{corollary}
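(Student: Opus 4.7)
The plan is to derive each of the three identities from a suitable application of Lemma~\ref{lemma: subsystem consistency} by enumerating, on each side, the antichains~$\beta$ that lie~$\preceq$ the relevant~$\{\boldC\}$ in the two lattices involved. I would begin with~\eqref{equ:cross scale_0}. Taking $\boldA=\{S_1\}$, $\boldB=\{S_1,S_2\}$, and $\boldC=\{S_1\}$ in Lemma~\ref{lemma: subsystem consistency}, the lattice $\cA(\{S_1\})$ contributes only the antichain $\{\{1\}\}$, while in $\cA(\{S_1,S_2\})$ the antichains~$\preceq\{\{1\}\}$ are exactly $\{\{1\}\}$ and $\{\{1\}\{2\}\}$. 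Substitution into the lemma gives~\eqref{equ:cross scale_0} verbatim.

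Next I would turn to~\eqref{equ:cross scale_2}. Applying Lemma~\ref{lemma: subsystem consistency} with $\boldA=\{S_1,S_2\}$, $\boldB=\{S_1,S_2,S_3\}$, and $\boldC=\{S_1\}$, the left-hand cumulative sum is $\Pi^{T}_{12}(\{\{1\}\})+\Pi^{T}_{12}(\{\{1\}\{2\}\})$, while the right-hand sum ranges over the five antichains in $\cA(\{S_1,S_2,S_3\})$ that contain $\{S_1\}$ and are antichains of the required form, namely $\{\{1\}\}$, $\{\{1\}\{2\}\}$, $\{\{1\}\{3\}\}$, $\{\{1\}\{23\}\}$, and $\{\{1\}\{2\}\{3\}\}$. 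Once~\eqref{equ:cross scale} is available, subtracting it from this identity cancels $\Pi^{T}_{12}(\{\{1\}\{2\}\})$ on the left and the pair $\Pi^{T}_{123}(\{\{1\}\{2\}\})+\Pi^{T}_{123}(\{\{1\}\{2\}\{3\}\})$ on the right, leaving precisely~\eqref{equ:cross scale_2}.

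The crux of the proof, and the main obstacle, is therefore establishing~\eqref{equ:cross scale}. This identity is not directly covered by Lemma~\ref{lemma: subsystem consistency} as stated, because both sides are cumulative sums at the \emph{two}-element antichain $\{\{1\}\{2\}\}$, whereas PID Axiom~\ref{pid axiom:mutual constrains} constrains only cumulative sums at singleton antichains of the form $\{\boldC\}$. The clean route I would take is to observe that the subsystem-consistency argument underlying Lemma~\ref{lemma: subsystem consistency} extends verbatim to any antichain $\alpha$ whose elements lie in $\boldA\cap\boldB$: the shared-information functional associated with $\alpha$ depends only on the sets appearing in $\alpha$ and on $T$, and therefore $\sum_{\beta\preceq_{\boldA}\alpha}\Pi^{T}_{\boldA}(\beta)=\sum_{\beta\preceq_{\boldB}\alpha}\Pi^{T}_{\boldB}(\beta)$. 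With $\alpha=\{\{1\}\{2\}\}$, $\boldA=\{S_1,S_2\}$ and $\boldB=\{S_1,S_2,S_3\}$, the only antichain~$\preceq\alpha$ in $\cA(\boldA)$ is $\alpha$ itself, whereas in $\cA(\boldB)$ they are $\{\{1\}\{2\}\}$ and $\{\{1\}\{2\}\{3\}\}$; the extended consistency therefore yields~\eqref{equ:cross scale}. I would record this extension explicitly (as an intrinsic-redundancy assumption of the Williams--Beer style $\Pi$) before invoking it, so that the corollary reduces to a bookkeeping exercise over downsets in the two relevant lattices.
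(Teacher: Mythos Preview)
Your proposal is correct and follows essentially the same route as the paper: both derive~\eqref{equ:cross scale_0} directly from Lemma~\ref{lemma: subsystem consistency} with $\boldC=\{S_1\}$, and both obtain~\eqref{equ:cross scale_2} by subtracting~\eqref{equ:cross scale} from the five-term expansion of $\Pi^T_1(\{\{1\}\})$ in the $123$-system. For~\eqref{equ:cross scale} the paper argues by writing out the decompositions of $I(S_1;T)$ and $I(S_2;T)$ in each ambient system and equating the atoms that appear in \emph{both} expansions; this is precisely your ``extended consistency at a non-singleton antichain,'' just phrased set-theoretically rather than via the cumulative redundancy function. You are right that this step is not a consequence of Lemma~\ref{lemma: subsystem consistency} as stated (which, through PID Axiom~\ref{pid axiom:mutual constrains}, constrains only singleton $\{\boldC\}$), and both your argument and the paper's rely on the standard Williams--Beer assumption that the cumulative redundancy $\sum_{\beta\preceq\alpha}\Pi(\beta)$ is intrinsic to $\alpha$ and $T$; you make this dependence explicit, whereas the paper leaves it implicit in the phrase ``atoms contained in both.''
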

\begin{proof}
For the system $(S_1, S_2, T)$ and $(S_1, T)$, according to Lemma~\ref{lemma: subsystem consistency}, let $\boldA=\{S_1,S_2\}$, $\boldB=\{S_1\}$, and $\boldC=\{S_1\}$, then we have
\begin{align}\label{equ:decompose_of_0}
\Pi^{T}_{1}(\!\bigl\{\!\{1\}\!\bigl\}\!) = \Pi^{T}_{12}(\!\bigl\{\!\{1\}\{2\}\!\bigl\}\!) +\Pi^{T}_{12}(\!\bigl\{\!\{1\}\!\bigl\}\!).
\end{align}  
Similarly, for the system $(S_1, S_2, T)$ and $(S_2, T)$, we have 
\begin{align}
\Pi^{T}_{2}(\!\bigl\{\!\{2\}\!\bigl\}\!) = \Pi^{T}_{12}(\!\bigl\{\!\{1\}\{2\}\!\bigl\}\!) +\Pi^{T}_{12}(\!\bigl\{\!\{2\}\!\bigl\}\!),\nonumber
\end{align}  
where the information atoms contained in both $\Pi^{T}_{1}(\!\bigl\{\!\{1\}\!\bigl\}\!)$ and $\Pi^{T}_{2}(\!\bigl\{\!\{2\}\!\bigl\}\!)$ is $\Pi^{T}_{12}(\!\bigl\{\!\{1\}\{2\}\!\bigl\}\!)$.

Then, following the same approach, we focus on the system $(S_1,S_2,S_3,T)$ and $(S_1, T)$, i.e., we let $\boldA=\{S_1,S_2,S_3\}$, $\boldB=\{S_1\}$, and $\boldC=\{S_1\}$.
Then, by~Lemma~\ref{lemma: subsystem consistency} we have
\begin{align}\label{equ:decompose_of_1}
\Pi^{T}_{1}&(\!\bigl\{\!\{1\}\!\bigl\}\!) =\Pi^{T}_{123}(\!\bigl\{\!\{1\}\{2\}\{3\}\!\bigl\}\!) +\Pi^{T}_{123}(\!\bigl\{\!\{1\}\{2\}\!\bigl\}\!)\nonumber\\
&+\Pi^{T}_{123}(\!\bigl\{\!\{1\}\{3\}\!\bigl\}\!)+\Pi^{T}_{123}(\!\bigl\{\!\{1\}\{23\}\!\bigl\}\!)+\Pi^{T}_{123}(\!\bigl\{\!\{1\}\!\bigl\}\!).
\end{align}  
Similarly, for the system $(S_1, S_2,S_3,T)$ and $(S_2, T)$, we have
\begin{align}
\Pi^{T}_{2}&(\!\bigl\{\!\{2\}\!\bigl\}\!) =\Pi^{T}_{123}(\!\bigl\{\!\{1\}\{2\}\{3\}\!\bigl\}\!) +\Pi^{T}_{123}(\!\bigl\{\!\{1\}\{2\}\!\bigl\}\!)\nonumber\\
&+\Pi^{T}_{123}(\!\bigl\{\!\{2\}\{3\}\!\bigl\}\!)+\Pi^{T}_{123}(\!\bigl\{\!\{2\}\{13\}\!\bigl\}\!)+\Pi^{T}_{123}(\!\bigl\{\!\{2\}\!\bigl\}\!),\nonumber
\end{align}  
where the information atoms contained in both $\Pi^{T}_{1}(\!\bigl\{\!\{1\}\!\bigl\}\!)$ and $\Pi^{T}_{2}(\!\bigl\{\!\{2\}\!\bigl\}\!)$ are $\Pi^{T}_{123}(\!\bigl\{\!\{1\}\{2\}\{3\}\!\bigl\}\!) $ and $\Pi^{T}_{123}(\!\bigl\{\!\{1\}\{2\}\!\bigl\}\!)$.
Hence, we have
\begin{align}
\Pi^{T}_{12}(\!\bigl\{\!\{1\}\{2\}\!\bigl\}\!) &= \Pi^{T}_{123}(\!\bigl\{\!\{1\}\{2\}\{3\}\!\bigl\}\!) +\Pi^{T}_{123}(\!\bigl\{\!\{1\}\{2\}\!\bigl\}\!),\nonumber
\end{align}  
where $\{\Pi^{T}_{12}(\!\bigl\{\!\{1\}\{2\}\!\bigl\}\!)\}$ and $\{\Pi^{T}_{123}(\!\bigl\{\!\{1\}\{2\}\{3\}\!\bigl\}\!)$, $ \Pi^{T}_{123}(\!\bigl\{\!\{1\}\{2\}\!\bigl\}\!)\}$ are the only atom(s) that are contained in both $I(S_1,T)$ (i.e., $\Pi^{T}_{1}(\!\bigl\{\!\{1\}\!\bigl\}\!)$) and $I(S_2,T)$ (i.e., $\Pi^{T}_{2}(\!\bigl\{\!\{2\}\!\bigl\}\!)$) from the decompositions under the scope of $(S_1, S_2,T)$ and $(S_1, S_2,S_3,T)$.
Therefore,~\eqref{equ:cross scale} is proved. 
%\red{[Explain (in words) why there are no more atoms in the above equation.]}

Then, by~\eqref{equ:cross scale}, \eqref{equ:decompose_of_0}, and \eqref{equ:decompose_of_1}, we have
\begin{align*}
\Pi^{T}_{12}(\!\bigl\{\!\{1\}\!\bigl\}\!) \!=\! \Pi^{T}_{123}(\!\bigl\{\!\{1\}\!\{3\}\!\bigl\}\!)\!+\!\Pi^{T}_{123}(\!\bigl\{\!\{1\}\!\{23\}\!\bigl\}\!)\!+\!\Pi^{T}_{123}(\!\bigl\{\!\{1\}\!\bigl\}\!),
\end{align*}
which is~\eqref{equ:cross scale_2}.
\end{proof}
Using Corollary~\ref{corollary: two result}, we prove Lemma~\ref{lemma: counter example}, \ref{lemma:UniqueUpToOneAtom}, \ref{lemma: satisfaction}, and~\ref{lemma:NoUniversalSubset} sequentially.
\subsubsection{Proof of Lemma~\ref{lemma: counter example}}
\label{app:counter example}
\begin{proof}
In $(\bar{S}_1,\bar{S}_2,\bar{S}_3,\bar{T})$, let $\bar{S}_1$ and $\bar{S}_2$ be two independent~$\text{Bernoulli}(1/2)$ variables, let $\bar{S}_3 = \bar{S}_1 \oplus \bar{S}_2$, and let~$\bar{T}=(\bar{S}_1,\bar{S}_2,\bar{S}_3)$. Therefore, we have
\begin{align}
\label{equ:I(T;S_1,S_2,S_3)}
I(\bar{T};\bar{S}_1,\bar{S}_2,\bar{S}_3)=2.
\end{align}

Our subsequent proof idea is to use Property~\ref{property: Independent Identity} to obtain the values of all PI-atoms in any system with two sources and the target variable (i.e. $(\bar{S}_1,\bar{S}_2,\bar{T}),(\bar{S}_1,\bar{S}_3,\bar{T})$ and $(\bar{S}_2,\bar{S}_3,\bar{T})$) and then show that their sum will be greater than the joint mutual information of the system $(\bar{S}_1,\bar{S}_2,\bar{S}_3,\bar{T})$. For simplicity, throughout the following proof, we adopt the convention that all statements are considered for distinct~$i,j,k \in \{1,2,3\}$. 

Firstly, by Property~\ref{property: Independent Identity} (Independent Identity and Remark~\ref{remark:def equal}), and since $\bar{T} =(\bar{S}_1,\bar{S}_2,\bar{S}_3)\overset{\text{det}}{=}(\bar{S}_i,\bar{S}_j)$ we have
\begin{align}
\label{equ:all_three_is_zero}
     \Pi^{\bar{T}}_{ij}(\bigl\{\{i\}\{j\}\bigl\}) = 0.
\end{align}
Considering that
    $\Pi^{\bar{T}}_{ij}(\bigl\{\{i\}\{j\}\bigl\})= \Pi^{\bar{T}}_{123}(\bigl\{\{1\}\{2\}\{3\}\bigl\}) + \Pi^{\bar{T}}_{123}(\bigl\{\{i\}\{j\}\bigl\})$,
which is identical to~\eqref{equ:cross scale}, and by Axiom~\ref{pid axiom: Monotonicity} (Monotonicity) and Lemma~\ref{Lemma: Nonnegativity} (Nonnegativity) we have 
\begin{align}
\label{equ:pi12=0}
    \Pi^{\bar{T}}_{123}(\bigl\{\{1\}\{2\}\{3\}\bigl\}) = \Pi^{\bar{T}}_{123}(\bigl\{\{i\}\{j\}\bigl\}) =  0.
\end{align}
Similarly, \eqref{equ:Information Atoms' relationship_2} implies that $I(\bar{T};\bar{S}_i)= \Pi^{\bar{T}}_{ij}(\bigl\{\{i\}\{j\}\bigl\}) + \Pi^{\bar{T}}_{ij}(\bigl\{\{i\}\bigl\})$, and since $ I(\bar{T};\bar{S}_i) =1$ and due to~\eqref{equ:all_three_is_zero}, it follows that $\Pi^{\bar{T}}_{ij}(\bigl\{\{i\}\bigl\})=1$, 
which by Corollary~\ref{corollary: two result} (specifically~\eqref{equ:cross scale_2}), equals
\begin{align}
\label{equ:pi-red}
    % \eqref{equ:un=1} &=
    \Pi^{\bar{T}}_{123}(\bigl\{\{i\}\bigl\}) + \Pi^{\bar{T}}_{123}(\bigl\{\{i\}\{jk\}\bigl\}) + \Pi^{\bar{T}}_{123}(\bigl\{\{i\}\{k\}\bigl\}).
\end{align}
Then, by \eqref{equ:pi12=0} and \eqref{equ:pi-red}, we have
\begin{align}
\label{equ:sun=1}
\Pi^{\bar{T}}_{123}(\bigl\{\{i\}\bigl\})+ \Pi^{\bar{T}}_{123}(\bigl\{\{i\}\{jk\}\bigl\}) = 1,
\end{align}
and hence,
\begin{align*}
I(\bar{T};\bar{S}_1,\bar{S}_2,\bar{S}_3) &\ge \Pi^{\bar{T}}_{123}(\bigl\{\{1\}\bigl\}) + \Pi^{\bar{T}}_{123}(\bigl\{\{1\}\{23\}\bigl\})  \\&+ \Pi^{\bar{T}}_{123}(\bigl\{\{2\}\bigl\})\nonumber+ \Pi^{\bar{T}}_{123}(\bigl\{\{2\}\{13\}\bigl\}) \\&+ \Pi^{\bar{T}}_{123}(\bigl\{\{3\}\bigl\}) + \Pi^{\bar{T}}_{123}(\bigl\{\{3\}\{12\}\bigl\}) =3,
\end{align*}
which contradicts~\eqref{equ:I(T;S_1,S_2,S_3)}.
\end{proof}

\subsubsection{Proof of Lemma~\ref{lemma:UniqueUpToOneAtom}}
\label{app:UniqueUpToOneAtom}
\begin{proof}
We consider the linear constraints relating to the following ten unknowns (the ten SI-atoms of a three-variable system). Define the following vector of atoms:
\begin{align*}
X = \Bigl[
&\,\Psi_{123}(\{\{1\}\{2\}\{3\}\}),\\
&\Psi_{123}(\{\{1\}\{2\}\}),\Psi_{123}(\{\{1\}\{3\}\}),\Psi_{123}(\{\{2\}\{3\}\}),\\
&\Psi_{123}(\{\{1\}\{23\}\}),\Psi_{123}(\{\{2\}\{13\}\}),\Psi_{123}(\{\{3\}\{12\}\}),\\
&\Psi_{123}(\{\{1\}\}),\,\Psi_{123}(\{\{2\}\}),\,\Psi_{123}(\{\{3\}\}) \Bigr]^T,
\end{align*}
and the following vector of entropies:
\begin{align*}
Y \;=\; \Bigl[
&\;H(S_1),\;H(S_2),\;H(S_3),\\
&\,H(S_1,S_2),\;H(S_1,S_3),\;H(S_2,S_3),\\
&\,H(S_1,S_2,S_3),\;H(S_1,S_2,S_3),\;H(S_1,S_2,S_3)
\Bigr]^T.
\end{align*}
Then, the nine constraints which arise from SID Axiom~\ref{sid axiom:mutual constrains}, along with the conditions from SID Axiom~\ref{sid axiom:mutual constrains} are as follows.
\begin{align*}
\begin{bmatrix}
1 & 1 & 1 & 0 & 1 & 0 & 0 & 1 & 0 & 0\\
1 & 1 & 0 & 1 & 0 & 1 & 0 & 0 & 1 & 0\\
1 & 0 & 1 & 1 & 0 & 0 & 1 & 0 & 0 & 1\\
1 & 1 & 1 & 1 & 1 & 1 & 0 & 1 & 1 & 0\\
1 & 1 & 1 & 1 & 1 & 0 & 1 & 1 & 0 & 1\\
1 & 1 & 1 & 1 & 0 & 1 & 1 & 0 & 1 & 1\\
1 & 1 & 1 & 1 & 1 & 1 & 0 & 1 & 1 & 1\\
1 & 1 & 1 & 1 & 1 & 0 & 1 & 1 & 1 & 1\\
1 & 1 & 1 & 1 & 0 & 1 & 1 & 1 & 1 & 1
\end{bmatrix}
X \;=\; Y.
\end{align*}

Solving the system provides the following definition of all SI atoms given~$\operatorname{Red}(S_1,S_2,S_3)$:
\begin{align}\label{equation:explcitSIatomsGivenRED}
    \Psi_{123}(\{1\}\{2\}\{3\})&\triangleq\operatorname{Red}(S_{1},S_2,S_{3})\nonumber\\
    \Psi_{123}(\big\{\{ij\}\big\})&=H(S_i)+H(S_j)\nonumber\\
    &-H(S_i,S_j)-\operatorname{Red}(S_{1},S_2,S_{3})\nonumber\\
    \Psi_{123}(\big\{\{i\}\{jk\}\big\})&=-H(S_1)-H(S_2)-H(S_3)\nonumber\\
    &+H(S_1,S_2)+H(S_1,S_3)+H(S_2,S_3)\nonumber\\
    &-H(S_1,S_2,S_3)+\operatorname{Red}(S_{1},S_2,S_{3}) \nonumber \\
    \Psi_{123}(\big\{\{i\}\big\})&=H(S_1,S_2,S_3)-H(S_j,S_k) 
\end{align}
for all~$i,j,k$ such that \(\{i,j,k\} = \{1,2,3\}\).
\end{proof}

% These nine equations form a linear system of rank \(9\) in the ten unknowns. Consequently, the solution space is one-dimensional: any single SI-atom can serve as a free variable, and once its value is specified, the other nine become uniquely determined. This proves that fixing the value of exactly one SI-atom is sufficient to determine all others.
% \red{[Incomplete proof. The solution space is of the form~$\{ t\Vec{v}+\Vec{u}|t\in\mathbb{R} \}$, hence fixing any entry~$i$ of a vector in the solution space \textbf{where~$v_i\ne 0$} determines~$t$, and hence determines the remaining entries of that vector as well. 
% Why would~$\operatorname{Red}$ be the nonzero entry? Solution: Solve the system, and show that the first entry of the vector~$\Vec{v}$ is nonzero.]}
% \end{proof}

% \subsection{Explicit Forms of SI atoms given \(\operatorname{Red}\).}
% \label{app:atoms}
% Building on Lemma~\ref{lemma:UniqueUpToOneAtom}, if we further specify a particular value for \(\operatorname{Red}(S_1, S_2, S_3)\) (i.e., for \(\Psi_{123}(\{\{1\}\{2\}\{3\}\})\)), then each remaining atom can be written in closed form as follows.
% \begin{align*}
%     \Psi_{123}(\{1\}\{2\}\{3\})&=\operatorname{Red}(S_{1},S_2,S_{3})\\
%     \Psi_{123}(\big\{\{ij\}\big\})&=H(S_i)+H(S_j)\\&-H(S_i,S_j)-\operatorname{Red}(S_{1},S_2,S_{3})\\
%     \Psi_{123}(\big\{\{i\}\{jk\}\big\})&=-H(S_1)-H(S_2)-H(S_3)\\&+H(S_1,S_2)+H(S_1,S_3)+H(S_2,S_3)\\&-H(S_1,S_2,S_3)+\operatorname{Red}(S_{1},S_2,S_{3}) \\\Psi_{123}(\big\{\{i\}\big\})&=H(S_1,S_2,S_3)-H(S_j,S_k) \qedhere
% \end{align*}
% where \(\{i,j,k\} = \{1,2,3\}\).

\subsubsection{Proof of Lemma~\ref{lemma: satisfaction}}
\label{app: satisfaction}
\begin{proof}~
SID Axiom~\ref{sid axiom: commutativity} (Commutativity) is clearly satisfied by Definition~\ref{definition:red}, since the condition is symmetric with respect to the input variables; SID Axiom~\ref{sid axiom: Self-redundancy} (Self-redundancy) is also satisfied by the definition. %~\eqref{equ:def_0}
SID Axiom~\ref{sid axiom: Monotonicity} (Monotonicity) follows from Definition~\ref{definition:red} since adding a new variable imposes additional constraints on the maximization:
%, leading to a non-increasing sequence of values:
\begin{align*}
    \operatorname{Red}&(S_1,S_2,S_3) = \max_{Q}\{H(Q) : H(Q \mid S_i) = 0, \forall i \in [3] \} \\ 
    &\leq \max_{Q} \{H(Q) : H(Q \mid S_i) = 0, H(Q \mid S_j) = 0 \} \\ 
    &= \operatorname{CI}(S_i, S_j), %\quad \forall i \neq j \in \{1,2,3\},
\end{align*}
for every distinct~$i$ and~$j$ in~$\{1,2,3\}$, where the last equality follows from the definition $\operatorname{CI}(S_1, S_2) \triangleq \max_{Q} H(Q), \text{s.t. } H(Q | S_1) = H(Q | S_2) = 0$~\cite{gacs1973common}. Moreover, since \(\operatorname{CI}(S_i, S_j) \leq I(S_i; S_j)\) \cite{gacs1973common},
% \red{[provide exact reference for this claim.]}
it follows that
\begin{align*}
    \operatorname{Red}(S_1, S_2, S_3) &\leq I(S_i; S_j), %\forall i \neq j \in \{1,2,3\}.\qedhere
\end{align*}
for every distinct~$i$,~$j$ in~$\{1,2,3\}$, hence SID Axiom~\ref{sid axiom: Monotonicity} follows.
\end{proof}

\subsubsection{Proof of Lemma~\ref{lemma:NoUniversalSubset}}
\label{app:NoUniversalSubset}

\begin{proof}
%System 1 ($\hat{S}_1,\hat{S}_2,\hat{S}_3,\hat{T}$):
Recall that System~1 contains six independent $\operatorname{Bernoulli}(1/2)$ variables $x_1,x_2,x_4,x_5,x_7,x_8$, three additional bits 
\begin{align}\label{equation:S1XOR}
    x_3=x_1\oplus x_2,~x_6=x_4\oplus x_5,~x_9=x_7\oplus x_8,
\end{align}
the source variables are $$\hat{S}_1=(x_1,x_4,x_7), \hat{S}_2=(x_2,x_5,x_8), \hat{S}_3=(x_3,x_6,x_9),$$the target is $\hat{T}=(x_1,x_5,x_9),$ and
%Since $ x_1, x_5, x_9$ are each contained in separate sources $\hat{S}_1, \hat{S}_2, \hat{S}_3$ and are independent of each other, we have 
\begin{align*}
    I(\hat{T}; \hat{S}_1, \hat{S}_2, \hat{S}_3) &= H(\hat{T}) =H(X_1,X_5,X_9)= 3.
\end{align*}

Intuitively, observe that each one of $x_1, x_5, x_9$ can be predicted by the other two sources in its respective XOR equation~\eqref{equation:S1XOR}. 
%$x_1=x_2 \oplus x_3$, so 
Therefore, given $\hat{S}_2$ and $\hat{S}_3$ one can determine $x_1$. Thus, the information about $x_1$ is uniquely shared by $\hat{S}_2$ and $\hat{S}_3$ in a synergistic way. This corresponds to the PID atom $\Pi^{\hat{T}}_{123}(\{\{1\}\{23\}\}) = H(x_1) = 1$.
Similarly, $x_5$ = $x_4 \oplus x_6$ implies that $\hat{S}_1$ and $\hat{S}_3$ together determine $x_5$, yielding $\Pi^{\hat{T}}_{123}(\{\{2\}\{13\}\}) = H(x_5) = 1$, and $x_9$ = $x_7 \oplus x_8$ implies that $\hat{S}_1 $ and $\hat{S}_2$ together determine $x_9$, yielding $\Pi^{\hat{T}}_{123}(\{\{3\}\{12\}\}) = H(x_9) = 1$. 
Then, all remaining atoms are zero in this system.
%\red{[What about an intuitive explanation regarding System~2?]} 
A similar intuition System~2 is given below.

% each of \( x_1, x_5, x_9 \) can be synergistically predicted by the other two variables within its synergistic group (marked in Figure~\ref{fig:System1}), leading to the following partial information decomposition (PID) atoms:
% \begin{align*}
%     \Pi^{\hat{T}}_{123}(\{\{1\}\{23\}\}) &= H(x_1) = 1, \\
%     \Pi^{\hat{T}}_{123}(\{\{2\}\{13\}\}) &= H(x_5) = 1, \\
%     \Pi^{\hat{T}}_{123}(\{\{3\}\{12\}\}) &= H(x_9) = 1,
% \end{align*}
% with all remaining PI-atoms equal to zero.

% \red{[Add a subsection wherever appropriate which says the following: If the independent identity is generalized to any target~$T$, then we can prove the following ground truth. Also, discuss how this generalization of independent identity follows naturally from the decomposition theorem for joint entropy of independent variables.]}

To justify the above claims regarding System~1 in formal terms, we analyze the system by considering the following three sub-target variables separated from target $\hat{T}$:
\[
\hat{T}_1 = x_1, \quad \hat{T}_2 = x_5, \quad \hat{T}_3 = x_9,
\]
where \( \hat{T} = (\hat{T}_1, \hat{T}_2, \hat{T}_3) \) and the three sub-targets are mutually independent.
 % \red{[Define a new system with one source and one target, and mention the corresponding atoms e.g.,~$\Pi^{T}_2$.]}
 
%Firstly, consider the system \( (\hat{S}_1, \hat{S}_2, \hat{S}_3, \hat{T}_1) \). 
We operate in three steps, where in the first we identify the zero PI-atoms, in the second we identify the nonzero ones, and in the third we combine the conclusions.

\textbf{Step 1:} Establishing zero PI-atoms.
%By PID Axiom~\ref{pid axiom: Self-redundancy}, i.e. $\operatorname{Red}(S_i \to T) = I(S_i;T)$, and since
By PID Axiom~\ref{pid axiom: Self-redundancy} we have~$\operatorname{Red}(\hat{S}_i \to \hat{T}) = I(\hat{S}_i;\hat{T})$, and since
\[
I(\hat{S}_2; \hat{T}_1) = I(\hat{S}_3; \hat{T}_1) = 0,
\]
it follows that
\begin{align}
\label{equ:I(S_i;T_1)=0}
\Pi^{\hat{T}_1}_{2}(\{\{2\}\}) = \Pi^{\hat{T}_1}_{3}(\{\{3\}\}) = 0    
\end{align}

%\red{where $\Pi^{\hat{T}_1}_{i}(\{\{i\}\})$ is the only atom decomposed from the system with source $\hat{S}_i$ and target $\hat{T}$ for $i\in \{2,3\}$. [What's the purpose? Reminder?]}
Next, by 
% PID Axiom~\ref{pid axiom:mutual constrains} (consistency, see~\eqref{equ:cross scale_0}),
% \begin{align}\label{equ:cross_scale}
%     \Pi^{\hat{T}_1}_{i}(\{\{i\}\}) &= \Pi^{\hat{T}_1}_{ij}(\{\{i\}\{j\}\}) + \Pi^{\hat{T}_1}_{ij}(\{\{i\}\}),  %\forall i \neq j \in \{1,2,3\}, 
% \end{align}
% for all distinct~$i,j\in\{1,2,3\}$.\red{[Don't need~\eqref{equ:cross_scale}?]}.
% %which is identical to~\eqref{equ:cross scale_0}.
% Then, combine~\eqref{equ:cross_scale} with 
PID Axiom~\ref{pid axiom: Monotonicity} (monotonicity), we have $\Pi^{\hat{T}_1}_{ij}(\{\{i\}\{j\}\}) \le \Pi^{\hat{T}_1}_{i}(\{\{i\}\})$, where $\Pi^{\hat{T}_1}_{i}(\{\{i\}\})=0$ for every~$i\ne 1$ by~\eqref{equ:I(S_i;T_1)=0}.
Therefore, since Lemma~\ref{Lemma: Nonnegativity} (nonnegativity) implies that $\Pi^{\hat{T}_1}_{ij}(\{\{i\}\{j\}\}) \ge 0$, we have
\begin{align}
     \label{equ:pi12=0_1}
    \Pi^{\hat{T}_1}_{ij}(\{\{i\}\{j\}\}) &= 0, \quad \forall i \neq j \in \{1,2,3\}.
\end{align}

Similarly, applying the consistency from Lemma~\ref{lemma: subsystem consistency}, i.e. \eqref{equ:cross scale}, to any system $(\hat{S}_i,\hat{S}_j,\hat{T}_1)$ and to $(\hat{S}_1,\hat{S}_2,\hat{S}_3,\hat{T}_1)$, we have 
\begin{align}
    \Pi^{\hat{T}_1}_{ij}(\{\{i\}\{j\}\}) &= \Pi^{\hat{T}_1}_{123}(\{\{1\}\{2\}\{3\}\}) + \Pi^{\hat{T}_1}_{123}(\{\{i\}\{j\}\}),  %\forall i \neq j \in \{1,2,3\}, 
    \label{equ:cross_scale_2}
\end{align}
and by Axiom~\ref{pid axiom: Monotonicity} (monotonicity) and Lemma~\ref{Lemma: Nonnegativity} (nonnegativity) we obtain:
\begin{align}
    0\le \Pi^{\hat{T}_1}_{123}(\{\{1\}\{2\}\{3\}\}) \le \Pi^{\hat{T}_1}_{ij}(\{\{i\}\{j\}\}) \overset{\eqref{equ:pi12=0_1}}{=} 0, \quad \forall i \neq j, \nonumber
\end{align}
which implies that 
\begin{align}\label{equation:PIzero123}
    \Pi^{\hat{T}_1}_{123}(\{\{1\}\{2\}\{3\}\})=0.
\end{align}
Then, by~\eqref{equ:cross_scale_2} and~\eqref{equ:pi12=0_1}, we have 
\begin{align}
    \label{equ:all_zero}
    \Pi^{\hat{T}_1}_{123}(\{\{i\}\{j\}\}) = 0, \quad \forall i \neq j.
\end{align}

% Essentially, $\hat{S}_2, \hat{S}_3$ provides $\hat{T}_1$ entirely on its own ($H(\hat{T}_1|\hat{S}_2, \hat{S}_3)=0$). Thus, when considering $I(\hat{S}_1,\hat{S}_2,\hat{S}_3;\hat{T}_1)$, the source $\hat{S}_1$ does not introduce any new information components beyond those in $I(\hat{S}_2,\hat{S}_3;\hat{T}_1)$ (since $I(\hat{S}_1;\hat{T}_1|\hat{S}_2, \hat{S}_3)=0$). 

%Also, we have $\hat{S}_2, \hat{S}_3$ containing the entire $\hat{S}_1$ (since

Finally, observe that $H(\hat{T}_1,\hat{S}_1|\hat{S}_2, \hat{S}_3)=0$, i.e., the entire system entropy is provided by~$\hat{S}_2,\hat{S}_3$.
Therefore, all PID atoms that do not include either $\{\hat{S}_2\}$, or
$\{\hat{S}_3\}$ or $\{\hat{S}_2,\hat{S}_3\}$
%(the information cannot be provided from $\hat{S}_2, \hat{S}_3$ either jointly or individually) 
are zero, i.e.,
% Besides, since $H(\hat{S}_1|\hat{S}_2, \hat{S}_3)=H(\hat{T}_1|\hat{S}_2, \hat{S}_3)=0$, (and $I(\hat{S}_2, \hat{S}_3;\hat{T}_1)=I(\hat{S}_1,\hat{S}_2, \hat{S}_3;\hat{T}_1)$,) we have the decomposition of $I(\hat{S}_1,\hat{S}_2, \hat{S}_3;\hat{T}_1)$ is nothing beyond the decomposition of $I(\hat{S}_2, \hat{S}_3;\hat{T}_1)$. Therefore, we have 
\begin{align}
\label{equ:other_zero}
\Pi^{\hat{T}_1}_{123}(\{\{1\}\}) =   \Pi^{\hat{T}_1}_{123}(\{\{12\}\})=\Pi^{\hat{T}_1}_{123}(\{\{13\}\})\nonumber \\=\Pi^{\hat{T}_1}_{123}(\{\{123\}\})=\Pi^{\hat{T}_1}_{123}(\{\{12\}\{13\}\})=0.
\end{align}
% $\Pi^{\hat{T}_1}_{1(23)}(\{\{1\}\{(23)\}\})=1.$

\textbf{Step 2:} Determining non-zero PI-atoms.
First, we have $I(\hat{T}_1; \hat{S}_1) = 1$, and then, we use PID Axiom~\ref{pid axiom: Self-redundancy} ($I(\hat{T}_1; \hat{S}_1)=\Pi^{\hat{T}_1}_{1}(\{\{1\}\})$) and Lemma~\ref{lemma: subsystem consistency}, i.e.,~\eqref{equ:subsystem} to get
\begin{align}
    &\Pi^{\hat{T}_1}_{1}\!(\{\!\{1\}\!\})=\Pi^{\hat{T}_1}_{123}(\{\{1\}\{2\}\{3\}\}) \!+\! \Pi^{\hat{T}_1}_{123}(\{\{1\}\{23\}\}) \\ \nonumber
    &+ \Pi^{\hat{T}_1}_{123}(\{\{1\}\{3\}\}) + \Pi^{\hat{T}_1}_{123}(\{\{1\}\{2\}\}) +\Pi^{\hat{T}_1}_{123}(\{\{1\}\}) = 1.
\end{align}
Combining this with \eqref{equation:PIzero123}, \eqref{equ:all_zero}, and \eqref{equ:other_zero}, we obtain:
\[
\Pi^{\hat{T}_1}_{123}(\{\{1\}\{23\}\}) = 1.
\]
The same reasoning applies symmetrically to \( (\hat{S}_1, \hat{S}_2, \hat{S}_3, \hat{T}_2) \) and \( (\hat{S}_1, \hat{S}_2, \hat{S}_3, \hat{T}_3) \), yielding $\Pi^{\hat{T}_2}_{123}(\{\{2\}\{13\}\}) = 1$ 
%for system $(\hat{S}_1,\hat{S}_2,\hat{S}_3,\hat{T}_2)$ 
and $\Pi^{\hat{T}_3}_{123}(\{\{3\}\{12\}\}) = 1$.
%for system $(\hat{S}_1,\hat{S}_2,\hat{S}_3,\hat{T}_3)$.

\textbf{Step 3:} Final conclusion.
Since \( \hat{T} = (\hat{T}_1, \hat{T}_2, \hat{T}_3) \) and the three sub-targets are independent, the final conclusion for the system \( (\hat{S}_1, \hat{S}_2, \hat{S}_3, \hat{T}) \) is:
\begin{align*}
\Pi^{\hat{T}}_{123}(\{\{i\}\{jk\}\}) = 1, \quad \text{for all distinct}~i,j,k \in [3].
\end{align*}

% Based on the above, we conclude that the summation for joint mutual information $I(\hat{\boldS};\hat{T})$ in PID Axiom~\ref{pid axiom:mutual constrains} must at least contain these three PI-atoms to achieve that
% \begin{align*}
% I((\hat{S}_1,\hat{S}_2,\hat{S}_3);\hat{T})
% &= \Pi^{\hat{T}}_{123}(\{\{1\}\{23\}\}) + \Pi^{\hat{T}}_{123}(\{\{2\}\{13\}\})\\
% &\phantom{=}+ \Pi^{\hat{T}}_{123}(\{\{3\}\{12\}\})
% = 3.
% \end{align*}

We now turn to discuss System 2 (\(\bar{S}_1,\bar{S}_2,\bar{S}_3,\bar{T}\)). 
Recall that System~2 contains two independent $\operatorname{Bernoulli}(1/2)$ variables~$x_1,x_2$ and their exclusive or~$x_3=x_1\oplus x_2$.
%\(x_1\) and \(x_2\) which are two independent \(\operatorname{Bernoulli}(1/2)\) variables, and define
%
%x_3 = x_1 \oplus x_2.
%$
Then, $\bar{S}_1 = x_1$,  
$\bar{S}_2 = x_2$, 
$\bar{S}_3 = x_3$,
$\bar{T}   = (x_1, x_2, x_3),$ 
% \red{(this is the system from Lemma~\ref{lemma: counter example})}
and 
\begin{align*}
    I(\bar{T}; \bar{S}_1, \bar{S}_2, \bar{S}_3) &= H(\bar{T}) =H(x_1,x_2,x_3)= 2.
\end{align*}

% \red{[Furthermore,~\eqref{equ:sun=1} in the proof of Lemma~\ref{lemma: counter example} implies that 
% \begin{align}
% % \label{equ:all_zero_2}
% \Pi^{\bar{T}}_{123}(\bigl\{\{i\}\bigl\})+ \Pi^{\bar{T}}_{123}(\bigl\{\{i\}\{jk\}\bigl\}) = 1
% \end{align}
% for every distinct $i,j,k\in[3]$.]}
%, for each choice of distinct $i,j,k$ in $[3]$.
% $\Pi^{T}_{123}(\bigl\{\{i\}\bigl\})+ \Pi^{T}_{123}(\bigl\{\{i\}\{jk\}\bigl\}) = 1$. 
Firstly, by Property~\ref{property: Independent Identity} (Independent Identity and Remark~\ref{remark:def equal}), and since $\bar{T} =(\bar{S}_1,\bar{S}_2,\bar{S}_3)\overset{\text{det}}{=}(\bar{S}_i,\bar{S}_j)$ for every distinct $i,j\in[3]$, we have
\begin{align}
\label{equ:all_two_is_zero}
     \Pi^{\bar{T}}_{ij}(\bigl\{\{i\}\{j\}\bigl\}) = 0.
\end{align}
Considering that
    $\Pi^{\bar{T}}_{ij}(\bigl\{\{i\}\{j\}\bigl\})= \Pi^{\bar{T}}_{123}(\bigl\{\{1\}\{2\}\{3\}\bigl\}) + \Pi^{\bar{T}}_{123}(\bigl\{\{i\}\{j\}\bigl\})$,
which is identical to~\eqref{equ:cross scale}, and by Axiom~\ref{pid axiom: Monotonicity} (Monotonicity) and Lemma~\ref{Lemma: Nonnegativity} (nonnegativity) we have 
\begin{align}
\label{equ:pi12=0_2}
    \Pi^{\bar{T}}_{123}(\bigl\{\{1\}\{2\}\{3\}\bigl\}) = \Pi^{\bar{T}}_{123}(\bigl\{\{i\}\{j\}\bigl\}) =  0
\end{align}
Similarly, \eqref{equ:Information Atoms' relationship_2} implies that $I(\bar{T};\bar{S}_i)= \Pi^{\bar{T}}_{ij}(\bigl\{\{i\}\{j\}\bigl\}) + \Pi^{\bar{T}}_{ij}(\bigl\{\{i\}\bigl\})$, and moreover since $ I(\bar{T};\bar{S}_i) =1$ and~\eqref{equ:all_two_is_zero}, it follows that $\Pi^{\bar{T}}_{ij}(\bigl\{\{i\}\bigl\})=1$, 
which by Corollary~\ref{corollary: two result}, \eqref{equ:cross scale_2}, equals
\begin{align}
\label{equ:pi-red_2}
    \Pi^{\bar{T}}_{123}(\bigl\{\{i\}\bigl\}) + \Pi^{\bar{T}}_{123}(\bigl\{\{i\}\{jk\}\bigl\}) + \Pi^{\bar{T}}_{123}(\bigl\{\{i\}\{k\}\bigl\}).
\end{align}
Then, by \eqref{equ:pi12=0_2} and \eqref{equ:pi-red_2}, we have
\begin{align}
\label{equ:all_zero_2}
\Pi^{\bar{T}}_{123}(\bigl\{\{i\}\bigl\})+ \Pi^{\bar{T}}_{123}(\bigl\{\{i\}\{jk\}\bigl\}) = 1.
\end{align}
Similar to System 1, observe that $H(\bar{T},\bar{S}_1|\bar{S}_2, \bar{S}_3)=0$, i.e., the entire system entropy is provided by~$\bar{S}_2,\bar{S}_3$.
Therefore, all PID atoms that do not include either $\{\bar{S}_2\}$, or
$\{\bar{S}_3\}$, or $\{\bar{S}_2,\bar{S}_3\}$
%(the information cannot be provided from $\hat{S}_2, \hat{S}_3$ either jointly or individually) 
are zero, i.e.,
% \blue{Besides, since $H(\bar{S}_1|\bar{S}_2, \bar{S}_3)=H(\bar{T}|\bar{S}_2, \bar{S}_3)=0$, (and $I(\bar{S}_2, \bar{S}_3;\bar{T}_)=I(\bar{S}_1,\bar{S}_2, \bar{S}_3;\bar{T})$,) we have the decomposition of $I(\bar{S}_1,\bar{S}_2, \bar{S}_3;\bar{T})$ has nothing beyond the decomposition of $I(\bar{S}_2, \bar{S}_3;\bar{T})$. Therefore, we have 
\begin{align}\label{equation:Sys1Zeros}
\Pi^{\bar{T}}_{123}(\{\{1\}\}) =   \Pi^{\bar{T}}_{123}(\{\{12\}\})=\Pi^{\bar{T}}_{123}(\{\{13\}\})\nonumber \\=\Pi^{\bar{T}}_{123}(\{\{123\}\})=\Pi^{\bar{T}}_{123}(\{\{12\}\{13\}\})=0.
\end{align}
% $\Pi^{\hat{T}_1}_{1(23)}(\{\{1\}\{(23)\}\})=1.$
% }
Taking~\eqref{equ:all_zero_2} with $i=1$, we obtain:
\begin{align*}
\Pi^{\bar{T}}_{123}(\{\{1\}\{23\}\}) =1.
\end{align*}
Next, observing that~$H(\bar{T},\bar{S}_2|\bar{S}_1,\bar{S}_3)=H(\bar{T},\bar{S}_3|\bar{S}_1,\bar{S}_2)=0$, it is readily verified (similar to~\eqref{equation:Sys1Zeros}) that 
\begin{align*}
\Pi^{\bar{T}}_{123}(\{\{2\}\}) =   \Pi^{\bar{T}}_{123}(\{\{12\}\})=\Pi^{\bar{T}}_{123}(\{\{23\}\})\nonumber \\=\Pi^{\bar{T}}_{123}(\{\{123\}\})=\Pi^{\bar{T}}_{123}(\{\{12\}\{23\}\})=0,
\end{align*}
and that
\begin{align*}
\Pi^{\bar{T}}_{123}(\{\{3\}\}) =   \Pi^{\bar{T}}_{123}(\{\{23\}\})=\Pi^{\bar{T}}_{123}(\{\{13\}\})\nonumber \\=\Pi^{\bar{T}}_{123}(\{\{123\}\})=\Pi^{\bar{T}}_{123}(\{\{23\}\{13\}\})=0.
\end{align*}
Finally, taking~\eqref{equ:all_zero_2} with $i=2$ and~$i=3$, it follows that
%Following a similar approach, we can get the result for 
$\Pi^{\bar{T}}_{123}(\{\{2\}\{13\}\})=\Pi^{\bar{T}}_{123}(\{\{3\}\{12\}\}) =1$.
%and the remaining PI-atoms are zero, matching System~1. 
% However, the joint mutual information here is
% \begin{align*}
% I((\bar{S}_1,\bar{S}_2,\bar{S}_3); \bar{T}) &= H(\bar{T}) = 2.
% \end{align*}
% Hence, in this scenario, the summation for joint mutual information $I(\hat{\boldS};\hat{T})$ in PID Axiom~\ref{pid axiom:mutual constrains} cannot include all three simultaneously for every system. Consequently, there is no single subset \(\mathcal{O}\) in \eqref{equ:subseto} that works universally for all three-source variable systems.
\,\\

Clearly, System~1 and System~2 above satisfy condition~$(i)$ with respect to the bijection~$\psi$ which maps an antichain in System~1 to an antichain in System~2 having identical indices to all sources (e.g.,~$\psi(\{\{\hat{S}_1\}\}=\{\{\bar{S}_1\}\}$).
Condition~$(ii)$ is also clearly satisfied.
\end{proof}
% }{}
\end{document}